\newcommand{\lc}{\left\{}  
\newcommand{\rc}{\right\}}
\newcommand{\lb}{\left(}  
\newcommand{\rb}{\right)} 
\newcommand{\Rcal}{\mathcal{R}} 
\newcommand{\Mcal}{\mathcal{M}} 
\newcommand{\Ical}{\mathcal{I}} 
\newcommand{\Jcal}{\mathcal{J}} 
\newcommand{\bb}{\backslash} 
\newcommand{\ba}{\[\begin{aligned}}
\newcommand{\ea}{\end{aligned}\]}
\newcommand{\bi}{\begin{itemize}}
\newcommand{\ei}{\end{itemize}}
\newcommand{\SM}{\textsc{SM}}
\newcommand{\SMG}{\textsc{SMG}}
\newcommand{\SMTI}{\textsc{SMTI}}
\newcommand{\cSM}{\textsc{c3DSM}}
\newcommand{\SE}{\textsc{SE}}
\begin{document}
\title{Stable marriage with general preferences}

\author{Linda Farczadi,  Konstantinos Georgiou,
Jochen K\"{o}nemann}

\institute{Dept. of Combinatorics and Optimization,\\ University of Waterloo, Waterloo, Canada \\
\email{\{lfarczadi,k2georgi,jochen\}@uwaterloo.ca}}

\maketitle

\begin{abstract} We propose a generalization of the classical stable marriage problem. In our model, the preferences on one side of the partition are given in terms of arbitrary binary relations, which need not be transitive nor acyclic. This generalization is practically well-motivated, and as we show, encompasses the well studied hard variant of stable marriage where preferences are allowed to have ties and to be incomplete.  As a result, we prove that deciding the existence of a stable matching in our model is NP-complete. Complementing this negative result we present a polynomial-time algorithm for the above decision problem in a significant class of instances where the preferences are asymmetric. We also present a linear programming formulation whose feasibility fully characterizes the existence of stable matchings in this special case. 

Finally, we use our model to study a long standing open problem regarding the existence of cyclic 3D stable matchings. In
particular, we prove that the problem of deciding whether a fixed 2D perfect matching can be extended to a 3D stable matching is \textsc{NP}-complete, showing this way that a natural attempt to resolve the existence (or not) of 3D stable matchings is bound to fail. 
\end{abstract}

\section{Introduction}

The Stable Marriage (\textsc{SM}) problem is a classical bipartite matching problem first introduced by Gale and Shapley
\cite{gale1962college}.  An instance of the problem consists of a set of $n$ men, and a set of $n$ women. Each man (woman) has a preference list that is a total order over the entire set of women (men). The goal is to find a stable matching between the men and women, meaning that there is no (man, woman) pair that both prefer each other to their current partners in the matching.  Since its introduction, the stable marriage problem has become one of the most popular combinatorial problems with several books being dedicated to its study \cite{gusfield1989stable, knuth1976mariages, roth1992two} and more recently \cite{manlove2013algorithmics}.  The popularity of this model arises not only from its nice theoretical properties but
also from its many applications. In particular, a wide array of allocation problems from many diverse fields can be analyzed within its context. Some well known examples include the labour market for medical interns, auction markets, the college admissions market, the organ donors market, and many more \cite{roth1992two}.

In their seminal work Gale and Shapley showed that every instance of \textsc{SM} admits a solution and such a solution can be 
computed efficiently using the so-called Gale-Shapley (or man-proposing) algorithm. Among the many new variants of this classical problem, two extensions have received most of the attention: incomplete preference lists and ties in the 
preferences. Introducing either one of these extensions on its own does not pose any new challenges, meaning that solutions are still guaranteed to exist, all solutions have the same size, and they can be computed using a modification of the original Gale-Shapley algorithm \cite{gale1985some, gusfield1989stable}. However, the same cannot be said about the Stable Marriage problem with Ties and Incomplete Lists (\textsc{SMTI}) that incorporates both extensions. In this variant stable matchings 
no longer need to be of the same size, even though they are still guaranteed to exist. In fact, deciding whether a given instance admits a stable matching of a given size is $\textsc{NP}$-hard \cite{hard}, even in the case where ties occur only on one side of the partition. Several papers have studied the approximate variants of this problem (see \cite{manlove2013algorithmics} for a more complete account).

A central assumption in most variants of \textsc{SM} is that agents' preferences are transitive (i.e., if $x$ is preferred to $y$, and $y$ is preferred to $z$ then $x$ is also preferred to $z$). However, there are several studies \cite{birnbaum2008experimental, fishburn1991nontransitive, brams2009mathematics, may1954intransitivity} that suggest that non-transitive, and even cyclic preferences arise naturally. Cyclicity, for example, may be introduced in the context of multi-attribute comparisons \cite{ fishburn1999preference}; e.g., consider the following study from \cite{may1954intransitivity} where $62$ college students were asked to make binary comparisons between three potential marriage partners $x$,$y$ and $z$ according to the following three criteria: intelligence, looks and wealth. The candidates had the following attributes: candidate $x$ was very intelligent, plain, and well off; candidate $y$ was intelligent, very good looking, and poor; and
candidate $z$ was fairly intelligent, good looking, and rich. From the $62$ participants, $17$ displayed the following cyclic preference: $x$ was preferred to $y$, $y$ was preferred to $z$, and $z$ was preferred to $x$. In order to better capture such situations there is a need for a model that allows for more general preferences.

Addressing this need we propose the Stable Marriage with General Preferences (\SMG) problem. As in \SM, in an instance of \SMG\ we are given $n$ men, and $n$ women, and the preferences of men are complete total orders over the set of women. The preferences of women, on the other hand, are given in terms of arbitrary binary relations over the men. Each of these binary relations will be represented by a set of ordered pairs of men. We say that a woman prefers man $x$ at least as much as man $y$ if the ordered pair $(x,y)$ is part of her preference set. A matching is then stable as long as for every unmatched (man, woman) pair at least one member prefers her mate in the matching at least as much as the other member of the pair.

This introduction of non-transitive preferences, even when restricted to just one side of the partition, changes the properties
of the model drastically. Like in \SM\ any solution must be a perfect matching. However, solutions are no longer guaranteed to exist. We show that non-transitive preferences generalize both incomplete lists and ties by reducing the \SMTI\ to \SMG. In doing so, we prove that the \SMG\ problem is also \textsc{NP}-hard. In addition, we provide results on the structural properties of the \SMG\ problem and give sufficient conditions for the problem to be solvable in polynomial time.

The second half of this paper focuses on three-dimensional stable matching models whose study was initiated by Knuth~\cite{knuth1976mariages}. We will be particularly interested in the Cyclic 3-Dimensional Stable Matching problem (\cSM), where we are given a set of $n$ men, a set of $n$ women and in addition a set of $n$ dogs. The preferences of the men are complete total orders over the set of women. Similarly the women have preferences over the dogs, and the dogs have preferences over the men. A 3D matching is said to be stable if there is no (man, woman, dog) triple that is strictly preferred to their current triples in the matching by each of its members. A prominent open question is whether every instance of \cSM\ admits a stable matching, and whether it can be computed efficiently. 

A natural avenue for attacking \cSM\ is to solve the following problem which we refer to as Stable Extension (\SE): suppose we fix a perfect matching $M$ on dogs and men, can we efficiently determine whether $M$ is extendible to a 3D stable matching? Recall that women have preferences over dogs only, but note that the given matching $M$ induces preferences over their male owners as well! In essence, this allows us to state the \SE\ problem as a two dimensional bipartite matching problem, and we show in Theorem \ref{thm0} that \SE\ can be seen as a special case of \SMG.  We then prove that \SE\ remains \textsc{NP}-complete.

\medskip
\noindent \textbf{Contributions.} In Section \ref{sec3} we show the following result.

\begin{theorem}\label{thm1} \SMG\ is \textsc{NP}-complete. 
\end{theorem}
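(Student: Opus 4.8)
The plan is to establish the two standard ingredients of \textsc{NP}-completeness: membership in \textsc{NP}, and \textsc{NP}-hardness by a polynomial reduction from \SMTI. Membership is immediate. A certificate is a perfect matching $M$ (recall that any solution to \SMG\ must be perfect), and to verify stability I would iterate over all $O(n^2)$ unmatched pairs $(m,w)\notin M$, checking via the man's total order and the woman's binary relation whether \emph{both} strictly prefer each other to their $M$-partners. Since each such check is constant time once the relations are stored, $M$ is stable if and only if no blocking pair is found, giving a polynomial-time verifier.

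For hardness I would reduce from the decision problem ``does a given \SMTI\ instance admit a weakly stable matching of a prescribed size,'' which is \textsc{NP}-hard by \cite{hard}, even when ties occur only on the women's side. Concretely, I would reduce from the problem of deciding whether an \SMTI\ instance (with strict, possibly incomplete, men's lists and women's lists with ties) admits a \emph{complete}, i.e.\ perfect, weakly stable matching. The easy part of the encoding uses the generality of \SMG\ directly: a woman's partial-order-with-ties preference embeds into a binary relation by placing the ordered pair $(x,y)$ in her relation exactly when she weakly prefers $x$ to $y$, so that a symmetric pair $\{(x,y),(y,x)\}$ records a tie and a single pair records strict preference. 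Each man's strict list is extended to a complete total order, as \SMG\ demands, by appending the remaining women below his acceptable ones.

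The delicate point, and the main obstacle, is reconciling incompleteness with \SMG's requirement that every agent be matched, together with the fact that mutually unacceptable pairs must never appear. Merely bottom-ranking unacceptable partners fails: a perfect \SMG\ matching that pairs $w$ with a man $m$ she finds unacceptable need not expose any blocking pair, since every man whom $w$ ranks above $m$ may already be matched to a partner he likes at least as much, and then this spurious matching is stable even when the underlying \SMTI\ instance has no complete stable matching. To rule this out I would introduce \emph{dummy} men and women acting as garbage collectors: each real agent receives a private dummy partner representing the state of being single, and the dummies' preferences are arranged so that (i) a real agent is matched to its dummy precisely when the corresponding \SMTI\ agent is unmatched, and (ii) matching any mutually unacceptable real pair always creates a blocking pair through the gadget.

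The correctness argument then has two directions. In the forward direction, a weakly stable \SMTI\ matching of the target size extends to a perfect stable \SMG\ matching by routing every unmatched agent to its dummy partner and checking that no new blocking pair arises. In the reverse direction, any perfect stable \SMG\ matching, restricted to the real acceptable pairs, is weakly stable in \SMTI\ and attains the target size, because the gadget forbids the stable use of any unacceptable pair. I expect the bulk of the technical effort to lie precisely here: verifying that the gadget admits no stable matching on a forbidden pair, and that the dummy agents can always be matched off consistently without introducing spurious blocking pairs, so that the existence of a stable \SMG\ matching corresponds exactly to the \SMTI\ size question. Combined with membership in \textsc{NP}, this yields Theorem~\ref{thm1}.
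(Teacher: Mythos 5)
Your overall architecture matches the paper's: membership in \textsc{NP} via checking all $O(n^2)$ unmatched pairs, and hardness via a reduction from \SMTI\ with ties on the women's side only that encodes a woman's weak preference among her \emph{acceptable} men as the binary relation $\Rcal_c = \lc (b,b') : b,b' \in P(c), \ b \succeq_c b' \rc$. You also correctly identify the crux: bottom-ranking unacceptable partners fails, and some gadget must render unstable any matching that pairs a man outside his acceptable list. (One small imprecision on membership: blocking in \SMG\ is not ``both strictly prefer each other''; the woman's side of the condition is $(N(c),b) \notin \Rcal_c$, which for an arbitrary relation is not equivalent to her strictly preferring $b$ --- she may prefer neither of the two, or both. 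The verifier is still trivially polynomial, but the predicate you describe is the wrong one.)

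The genuine gap is the gadget, which you leave unconstructed and whose stated specification would in fact break the reduction. You ask for a private dummy partner per real agent such that (i) a real agent is matched to its dummy precisely when the corresponding \SMTI\ agent is unmatched. But if being matched to one's dummy can be \emph{stable} (as (i) requires, in order to encode ``single''), then every weakly stable \SMTI\ matching --- perfect or not --- lifts to a stable \SMG\ matching; since every \SMTI\ instance admits some weakly stable matching (break ties arbitrarily and run Gale--Shapley), every constructed instance would be a yes-instance of \SMG\ regardless of whether a perfect weakly stable matching exists. Your condition (ii) polices unacceptable \emph{real} pairs but not real-to-dummy pairs, while your reverse direction needs precisely that no real agent stably sits with its dummy, contradicting (i). The fix is far simpler than the apparatus you propose: the paper adds a \emph{single} extra pair $b_{n+1}, c_{n+1}$, sets $\Rcal_{c_{n+1}} = \emptyset$, has $b_{n+1}$ rank $c_{n+1}$ first, and has every real man $b$ rank $c_{n+1}$ immediately after $P(b)$. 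The empty relation forces $N(b_{n+1}) = c_{n+1}$ in any stable matching, and then any man matched outside $P(b)$ prefers $c_{n+1}$ to his partner and blocks with her, since $(N(c_{n+1}),b) \in \Rcal_{c_{n+1}}$ is impossible. This one trap simultaneously forbids unacceptable pairs and ``single'' states, after which the equivalence with perfect weakly stable matchings of $\Ical$ follows directly --- no per-agent dummies and no size bookkeeping are needed.
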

We then identify a significant class of instances that are solvable in polynomial time: those where the preferences are asymmetric, meaning that for every pair of men $x,y$, each woman prefers at most one to the other. We then prove the following result.
\begin{theorem}\label{thm2} For instances of \SMG\ with asymmetric preferences, there exists a polynomial time algorithm 
that finds a solution if and only if one exists.
\end{theorem}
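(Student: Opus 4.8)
The plan is to reduce stability to a clean domination condition, drive the search with a men-proposing deferred-acceptance process, and then confront the one genuinely new phenomenon, \emph{cyclic} women's preferences, where the classical argument breaks. For a matching $M$ write $M(\cdot)$ for the partner map. Unwinding the definition, $M$ is stable exactly when no pair simultaneously has the man strictly coveting the woman and the woman unable to defend; that is, iff for every man $m$ and woman $w$ with $w >_m M(m)$ one has $(M(w),m)\in R_w$. Writing $D_w(p)=\{m : (p,m)\in R_w\}$ for the set of men that $p$ dominates in $w$'s eyes, this says $\{m : w >_m M(m)\}\subseteq D_w(M(w))$ for every $w$. Asymmetry guarantees that for any two distinct men exactly one of ``dominates'', ``is dominated'', ``incomparable'' holds, so the condition is well posed and, crucially, no ties (the ingredient responsible for the hardness of \SMTI\ in Theorem~\ref{thm1}) occur.

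First I would run men-proposing deferred acceptance, each man proposing down his total order. When $w$ holds $p$ and receives a proposal from $m$, she rejects $m$ if $(p,m)\in R_w$, switches to $m$ (rejecting $p$) if $(m,p)\in R_w$, and, if $p$ and $m$ are incomparable, keeps $p$ and rejects $m$. The backbone is a \emph{safe-rejection lemma}: no woman ever rejects a man who is her partner in some stable matching. I would prove this by induction on the sequence of rejections. Note that in each of the three cases the rejected man $r$ fails to dominate the retained man $p$, i.e.\ $(r,p)\notin R_w$ (by asymmetry in the first two cases, by incomparability in the third). Now when $w$ rejects $r$ she holds some $p$ that has reached $w$, so by induction $p$ has no stable partner he prefers to $w$; hence in any stable $M$ with $M(r)=w$ we would have $w>_p M(p)$, while $(M(w),p)=(r,p)\notin R_w$ makes $(p,w)$ a blocking pair, a contradiction. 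In particular, if some man exhausts his list he has no stable partner, and since every \SMG\ solution is perfect we may correctly report infeasibility.

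Here lies the crux, and the departure from classical \SM. A woman's tentative partner changes only by moving \emph{up} a dominance chain of $R_w$; but because $R_w$ may be cyclic (and may leave men incomparable), this no longer implies that her final partner dominates \emph{every} man she rejected along the way. Consequently, unlike in Gale--Shapley, the matching returned by deferred acceptance need not be stable: it may contain a blocking pair $(m,w)$ in which $w$'s partner is only incomparable to, or lies on a common $R_w$-cycle with, the rejected coveter $m$. Thus we may \emph{not} simply verify the output and declare infeasibility, and handling cyclic preferences is the main obstacle.

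To overcome this I would use the safe-rejection lemma as a pruning device: delete from every woman's relation and every man's list all pairs $m$--$w$ that deferred acceptance has certified can lie in no stable matching, and iterate. On the surviving sub-instance the only potential blocking pairs are among men left mutually undefended at a single woman through incomparability or a dominance cycle, and I would encode the residual requirement, that each woman's partner dominate all of her surviving coveters, as an auxiliary constraint-satisfaction problem solvable in polynomial time (for instance a $2$-SAT instance over the ``who is matched above $w$'' indicators, equivalently the search for a suitable orientation/kernel of the residual dominance digraph). Soundness is routine: any satisfying assignment yields a matching meeting the domination condition of the first paragraph. The hard part, and the technical heart of the proof, is \emph{completeness}: that infeasibility of the auxiliary problem forces the original instance to admit no stable matching. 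I expect to establish this by an exchange argument showing that any stable matching survives the pruning and induces a satisfying assignment, so that its nonexistence propagates back. As a cross-check, the same domination condition can be cast as a linear feasibility system whose integrality in the asymmetric case would both reprove Theorem~\ref{thm2} and furnish the promised \SMG\ linear-programming characterization.
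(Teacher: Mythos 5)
Your deferred-acceptance phase and safe-rejection lemma are sound, but the proof has a genuine gap exactly where you flag it: the residual step. Having rightly observed that under your acceptance rule (reject unless the newcomer dominates the \emph{currently held} man) the returned perfect matching need not be stable, you delegate all remaining work to a pruning loop plus an auxiliary CSP that you only claim is polynomial-time solvable, ``for instance'' as a 2-SAT instance or a kernel/orientation problem on the residual dominance digraph. None of this is substantiated, and it is doubtful as stated: kernel existence in general digraphs is \textsc{NP}-complete, the perfect-matching constraints (each woman matched to exactly one man) are not expressible in 2-SAT over your indicator variables, and the completeness direction --- which you yourself call the technical heart --- is only announced (``I expect to establish this by an exchange argument''). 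The pruning loop also gains nothing by itself: a man already deletes a woman from his list upon rejection, so re-running your procedure on the pruned instance reproduces the same execution. As written, the argument therefore does not decide precisely those instances where your DA terminates with an unstable perfect matching.

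The paper's fix is a different, stronger acceptance rule that makes any residual phase unnecessary: a woman $c$ accepts a proposal from $b$ only if $(b,b')\in\Rcal_c$ for \emph{every} man $b'$ who has ever proposed to her, not merely her current hold. Asymmetry then guarantees she holds at most one man at any time (two simultaneous acceptees would each have to dominate the other, i.e.\ both $(b,b'')\in\Rcal_c$ and $(b'',b)\in\Rcal_c$), and by construction her final fianc\'e dominates every man she ever rejected, so a perfect output is automatically stable --- your cyclic/incomparable obstruction simply disappears. For the failure case the paper runs, in effect, your safe-rejection argument in global form: if a stable matching $N$ exists but the algorithm fails, take the pair $(b_0,c_0)\in N$ whose rejection occurred earliest; the rejection supplies a proposer $b_1$ with $(b_0,b_1)\notin\Rcal_{c_0}$, and $N(b_1)\succ_{b_1} c_0$ (else $(b_1,c_0)$ blocks $N$), so $b_1$ was rejected by his own $N$-partner strictly earlier, a contradiction. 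One further correction to your closing remark: the hoped-for integrality of the LP in the asymmetric case is false --- the paper explicitly notes its polytope has fractional extreme points even for asymmetric preferences, and Theorem~\ref{thm4} instead proceeds by rounding an arbitrary fractional point (each man takes his most preferred woman in his fractional support), with asymmetry used to show no two men select the same woman.
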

We provide two different proofs. The first (given in Section \ref{sec3.2}) employs an adaptation of the classical Gale-Shapley man-proposing algorithm. The second (given in Section \ref{sec3.3}) relies on a polyhedral characterization: we define a polytope that is non-empty if and only if the instance admits a stable matching. We also develop an efficient rounding algorithm for its fractional points. Despite displaying stronger structural properties than \SMG, we show that \SE\ remains hard to solve.
\begin{theorem}\label{thm3} \SE\ is \textsc{NP}-complete.
\end{theorem}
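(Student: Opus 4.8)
The plan is to prove that $\SE$ is $\textsc{NP}$-complete by reduction from a known hard problem, most naturally from $\SMTI$ (or from $\SMG$ directly via Theorem~\ref{thm0}, though I suspect a direct combinatorial reduction will be cleaner). Membership in $\textsc{NP}$ is immediate: given a candidate 3D matching extending $M$, one checks in polynomial time that it is a perfect matching respecting $M$ and that no (man, woman, dog) triple blocks it. The bulk of the work is hardness. My first step would be to recall the structure of $\SE$: the matching $M$ pairs each dog with a man, and since women rank dogs, the fixed pairing $M$ pushes these dog-preferences forward to \emph{induced} preferences of women over men. Crucially, two women may own (via $M$) dogs that they rank inconsistently, so the induced preferences of the women over men need not be transitive or even acyclic -- this is precisely what lets $\SE$ sit inside $\SMG$ as stated in Theorem~\ref{thm0}. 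The task is then to design dog-preferences and a matching $M$ so that the induced woman-over-men relation encodes an arbitrary hard $\SMG$ (or $\SMTI$) instance.

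\medskip
The central construction would be a gadget that, given a target binary relation that some woman $w$ should have over the men, realizes it through her ranking of dogs together with the fixed matching $M$. Concretely, if we want $w$ to prefer man $x$ at least as much as man $y$, we ensure that the dog $d_x$ owned by $x$ under $M$ is ranked by $w$ at least as high as the dog $d_y$ owned by $y$. Because each dog has a unique owner under the perfect matching $M$, the woman's total order (or order-with-ties) over the $n$ dogs translates into a relation over the $n$ men; and by choosing different women to own their dogs via different segments of $M$, we can make the collection of induced relations match an $\SMTI$ instance where ties and incompleteness appear on the women's side. The dogs' own preferences over men, and the stability condition for triples, must then be set so that a blocking triple in the $\SE$ instance corresponds exactly to a blocking pair in the encoded $\SMTI$/\SMG\ instance, and vice versa.

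\medskip
The main obstacle I anticipate is the extra degree of freedom introduced by the \emph{dogs}: a blocking triple $(m,w,d)$ requires all three agents to strictly prefer the triple to their current partners, so the dog $d$ also gets a vote. I would need to engineer the dogs' preference lists so that they never veto a block that should occur and never manufacture a block that should not -- effectively rendering the dogs ``transparent'' to the stability analysis, so that stability of the extension reduces cleanly to a two-sided (men vs.\ women) stability condition isomorphic to the source instance. Arranging this faithfully, while simultaneously guaranteeing that $M$ really is extendible to \emph{some} 3D matching exactly when the source instance is solvable, is where the care lies; one typically handles it by giving each dog a preference list in which its $M$-partner man sits near the top, so the dog is essentially indifferent to being re-matched and thus cannot block. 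Once the gadget is verified to preserve blocking pairs in both directions, the reduction is complete, and combined with the $\textsc{NP}$-hardness of $\SMTI$ from \cite{hard} this establishes that $\SE$ is $\textsc{NP}$-complete.
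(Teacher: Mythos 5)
Your high-level frame---NP membership by checking triples, hardness by reduction from \SMTI\ with ties on the women's side, and encoding the women's preferences through dog rankings plus the fixed matching $M$---matches the paper's strategy, but your central encoding mechanism fails. You propose a pullback: woman $w$ prefers man $x$ at least as much as man $y$ whenever she ranks $x$'s dog $M(x)$ at least as high as $y$'s dog $M(y)$. Since $M$ is a bijection and each woman's ranking of dogs is a strict total order, this pullback gives each woman a strict, transitive, complete order over the men, so it can encode neither ties nor incompleteness, and the reduction from \SMTI\ cannot get started. (Your remark that inconsistency between two different women creates non-transitivity is a red herring: what matters per woman is her own induced relation, and under the pullback it is a total order.) The actual source of generality is the asymmetry in the induced relation of Theorem~\ref{thm0}: a pair $(b,c)$ can block only through a dog in $A_b=\{a : b \succ_a M(a)\}$, so the relevant comparison is between $c$'s own dog $M(N(c))$ and $\alpha(b,c)=\max_{\succ_c} A_b$, not between two owned dogs. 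The paper exploits this by introducing, for each tie-position $j$ of woman $c_i$, a \emph{threshold dog} $a_{i,j}$ that prefers exactly the men tied at position $j$ to its own $M$-partner, and by interleaving these threshold dogs with the owner dogs in $c_i$'s list; this yields $\alpha(b,c_i)=a_{i,r_{c_i}(b)}$ and hence $(b',b)\in\Rcal_{c_i}$ if and only if $c_i$ weakly prefers $b'$ to $b$ in the \SMTI\ instance (Lemma~\ref{lem4}). This idea is absent from your proposal.

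Your treatment of the dogs is moreover backwards. If every dog ranks its $M$-partner at or near the top, then $A_b=\emptyset$ for every man $b$, no triple can ever block, every extension of $M$ is vacuously stable, and your reduction always outputs yes. The dogs cannot be made ``transparent'': they are the only channel through which the women's constraints act, so some dogs must be deliberately willing to deviate---the threshold dogs above, and a universally willing dog $a_{n+2}$ that ranks its partner \emph{last}, which the paper uses to realize $\alpha(b,c_{n+1})=a_{n+2}$ for all $b$ and thereby simulate the ``empty relation'' trick of the Theorem~\ref{thm1} reduction. Indeed, in \SE\ one cannot simply set $\Rcal_{c_{n+1}}=\emptyset$ as in the \SMG\ reduction, because all relations are induced by the cyclic preferences; the paper instead needs the auxiliary agents $b_{n+1},b_{n+2},b_{n+3}$, $c_{n+1},c_{n+2},c_{n+3}$, $a_{n+2},a_{n+3}$ and Lemmas~\ref{lem5}--\ref{lem7} to force every original man into his acceptable list. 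Your proposal contains neither the threshold-dog encoding of ties nor this forcing gadget, and these are precisely the parts where the difficulty of the theorem lies.
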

The proof of the above theorem is given in Section \ref{sec4}. At a high level, its strategy resembles that of the proof of
Theorem \ref{thm1}. The details are however significantly more intricate, mainly due to the fact that \SE\ instances correspond to \SMG\ instances in which preferences are induced by a given 3D matching instance. As an interesting consequence for the \cSM, Theorem \ref{thm3} rules out the natural algorithmic strategy of fixing and extending a 2D perfect matching on two of the input sets.  \\

\noindent \textbf{Related work.} To the best of our knowledge, the stable marriage problem with preferences  given in terms of arbitrary binary relations has not been studied before. In \cite{abraham2003algorithmics} the authors do consider a version of \SM\ with non-transitive preferences, however unlike in our model, the preference relations are required to be acyclic. The authors do not study this problem directly but instead use it as a tool for developing a reduction between the stable roommates problem and the stable marriage problem.  

There is a rich literature on \SM\ and its variants. In particular, there has been significant work on the following approximate variant of \SMTI: given an instance of SMTI, find a  {\em maximum size} stable matching. When ties are allowed on both sides, the problem is  \textsc{NP}-hard to approximate within $33/29$  \cite{yanagisawa2007approximation} and the currently best known ration is $3/2$ \cite{mcdermid20093}. When ties are only allowed on the side of the women the problem is \textsc{NP}-hard to approximate within  $21/19$ \cite{halldorsson2007improved} and the currently best known ratio is $25/17$ \cite{iwama201425}.

A related model known as Stable Marriage with Indifference \cite{irving1994stable, manlove2002structure}, allows for preferences to be given in the form of partial orders, that are not necessarily expressible as a single list involving ties. That is, the indifference relation need not be transitive.    This model allows for several definitions of stability, and depending on which definition is used, solutions might not always exist. In particular, under strong stability, stable matchings might not always exists and it is show in \cite{irving2003strong} that the problem of deciding whether a strongly stable matching exists, given an instance of the stable marriage problem with partially ordered preferences, is NP-complete.

For \cSM, it is known that every instance admits a stable matching for $n \leq 4$ \cite{eriksson2006three}. The authors 
conjectured that this result can be extended to general instances.  In \cite{biro2010three} it was shown that if we allow unacceptable partners, the existence of a stable matching becomes \textsc{NP}-complete. In the same paper, and also independently in \cite{huang2010circular}, it was shown that the \cSM\ problem under a different notion of stability 
known as strong stability is also \textsc{NP}-complete. 

\section{Definitions and notation}

Throughout this paper we denote the set of men by $B$ and the set of women by $C$. In the 3D setting, we have an additional set of $n$ dogs $A$. A 3D perfect matching $\Mcal$ is a set of $n$ disjoint triples from $A \times B \times C$.  For  every dog $a \in A$ we denote by $\Mcal(a)$ the man that $a$ is matched to in $\Mcal$. Similarly for every man $b \in B$, $\Mcal(b)$ denotes  the woman that $b$ is matched to in $\Mcal$, and for every woman $c \in C$, $\Mcal(c)$ denotes the dog that $c$ is matched to in $\Mcal$. A 3D perfect matching can also be induced by fixing perfect matchings on any two of the following sets  $A \times B$, $B \times C$ or $C \times A$. In particular, we will use $M$ to denote a perfect matching on $A \times B$ and $N$ to denote a perfect matching on $B \times C$. We then define the 3D matching $M \circ N$ by setting $(a,b,c) \in M \circ N$  if and only if $(a,b) \in M$ and $(b,c) \in N$. For each $q \in A \cup B$ we denote by $M(q)$  the partner of $q$ in $M$, and similarly for each $q \in  B \cup C$ we denote by $N(q)$ the partner of $q$ in $N$.  If the preferences of an agent $q$ are given in terms of an ordering $P(q)$ (with or without ties) over a set $A$ then for all $x,y \in A$ we write $x \succ_q y$ to denote that $q$ strictly prefers $x$ to $y$ and $x \succeq_q y$ to denote that $q$ prefers $x$ at least as much as $y$.
 
\subsection{Stable marriage with ties and incomplete lists (\SMTI) }\label{sec2.1}

An instance $\Ical$ of \SMTI\ consists of a set $B$ of $n$ men and a set $C$ of $n$ women. Each man $b \in B$  has a preference list $P(b)$ that is an ordering over a subset of $C$ and is allowed to contain ties. Similarly each woman $c \in C$  has a preference list $P(c)$ that is an ordering over a subset of $B$ and is also allowed to contain ties.  A pair $(b,c)$ is said to be \textit{acceptable} if $b$ appears in $P(c)$ and $c$ appears in $P(b)$.  It is assumed that a woman $c$ is acceptable to a 
man $b$ if and only if man $b$ is acceptable to woman $c$. A pair $(b,c)$ is \textit{blocking} with respect to a matching $N$ if $(b,c)$ is an acceptable pair that is not in $N$ , $c \succ_b N(b)$ and $b \succ_c N(c)$. A matching $N$ is \textit{stable} if it uses only acceptable pairs and it has no blocking pairs. In that case, we also say that $N$ is a solution to $\Ical$. If a solution is of size $n$ we refer to it as a perfect stable matching. While all instances of \SMTI\ admit a stable matching, not all instances admit a perfect stable matching. In this paper we use \SMTI\ to refer to the decision problem of whether a given instance admits a perfect stable matching. This problem is known to be \textsc{NP}-complete \cite{hard}, even when the ties occur only in the preference lists of the women.

\subsection{Stable marriage with general preferences (\SMG)}\label{sec2.2}

An instance $\Ical$ of \SMG\ consists of a set $B$ of $n$ men and a set $C$ of $n$ women. Each man $b \in B$ has a preference list $P(b)$ that is complete total order over $C$.  Each woman $c \in C$ has a preference relation given in terms of a set of ordered pairs $\Rcal_c \subseteq B \times B$. For a given pair of men $b,b' \in B$ and woman $c \in C$ we interpret $(b,b') \in \Rcal_c$ as  woman $c$ preferring man $b$ at least as much as man $b'$. Note that whether $(b,b')$ is in $\Rcal_c$ 
is completely independent of whether $(b',b) \in \Rcal_c$. We say that a pair $(b,c)$ is \textit{blocking} with respect to a matching $N$, if $b$ and $c$ are not matched to each other and neither one  prefers its partner in $N$ at least as much as the other. Formally, $(b,c)$ is blocking if  $(b,c) \notin N$, $c \succ_b N(b)$ and $( N(c),b) \notin \Rcal_c$. A matching $N$ is \textit{stable} if it has no blocking pairs. It follows from this definition that any stable matching is a perfect matching. In this paper we use \SMG\ to refer to the decision problem of whether a given instance admits a  stable matching (which we also call a solution). An instance $\Ical$ of \SMG\ is said to have \textit{asymmetric preferences} if for every $b_1,b_2 \in B$ and $c \in C$ at most one of the following two conditions holds: $(b_1,b_2) \in \Rcal_c$ or $(b_2,b_1) \in \Rcal_c$.

Note that we could have obtained an alternate definition of stability by saying that a pair $(b,c)$ is blocking if $(b,c) \notin N$, $c \succ_b N(b)$ and $(b,N(c)) \in \Rcal_c$. However, the two models are equivalent via the following correspondence: create a new instance $\Ical'$ with sets $\Rcal'_c$ where $(b,b') \in \Rcal'_c$ if and only if $(b',b) \notin \Rcal_c$. Then the solutions that are stable for $\Ical$ under the definition of stability used in this paper, are exactly those that are stable for $\Ical'$ using the alternate definition of stability. Hence, we can use our definition of stability without loss of generality.

It is easy to see that any stable matching for an \SMG\ instance must be a perfect matching. This contrasts the case of \SMTI\ where a given instance can have stable matchings of different size. A second difference is that unlike in \SMTI, not all instances of \SMG\ admit a stable matching. This can be observed from the following example: suppose there are two men $b_1,b_2$ and two women $c_1,c_2$. Both men prefer woman $c_1$ to woman $c_2$ and woman $c_1$ has a preference relation given by the set $R_{c_1} = \emptyset$.  Then given any perfect matching, woman $c_1$ will always form a  blocking pair with the man that she is not matched to. Hence this instance does not admit a stable matching.

\subsection{Stable extension (\textsc{SE})}\label{se}

An instance $\Ical$ of \SE\ consists of a set of $n$ dogs $A$, a set of $n$ men $B$, and a set of $n$ women $C$,  together with a fixed perfect matching $M$ on $A \times B$.  The preferences are  defined cyclically ($A$ over $B$, $B$ over $C$, and $C$ over $A$) and are complete total orders over the corresponding sets.  A triple $(a,b,c)$ is \textit{blocking} with respect to a 3D matching $\Mcal$ if $(a,b,c) \notin \Mcal$, $b \succ_a \Mcal(a)$, $c \succ_b \Mcal(b)$ and $a \succ_c \Mcal(c)$. Note that if $(a,b,c)$ is a blocking triple then $a,b$ and $c$ must be part of three disjoint triples in $\Mcal$.   A 3D matching $\Mcal$ is stable if it has no blocking pairs.  It follows from this definition that any stable 3D matching must be a perfect matching.  We say that a perfect matching $N$ on $B \times C$ is a \textit{stable extension}, or a solution to $\Ical$, if $M \circ N$ is a 3D stable matching, and we use \SE\ to refer to the decision problem of whether a given instance admits a stable extension.

We now demonstrate how an instance $\Ical$ of \SE\ can be reduced to an \SMG\ instance. First, for each man $b \in B$ we define $A_b$ to be the set of dogs in $A$ that prefer $b$ to the man assigned to them in the fixed perfect matching $M$. That is $A_b = \lc a \in A : b \succ_a M(a) \rc$. The set $A_b$ contains exactly those dogs in $A$ with whom man $b$ can potentially be in a blocking triple when extending $M$ to a 3D matching. It follows that if $A_b = \emptyset$ then man $b$ cannot be in a blocking triple in any extension of $M$ to a 3D matching. Now, for each pair $(b,c)$ we define  $ \alpha(b,c)$ to be the dog in the set $A_b$ that woman $c$ prefers the most. That is $\alpha(b,c) = \max_{\succ_c} A_b$. If $A_b = \emptyset$ then we let $\alpha(b,c)$ be the dog in the last position in woman $c$'s preference list. We now define preferences for each woman $c \in C$ 
\begin{align}\label{set}
\Rcal_c := \lc (b,b') \,| \, b,b' \in B, b \neq b', \, M(b) \succeq_c \alpha(b',c) \rc.
\end{align}
Note that if $(N(c),b) \in \Rcal_c$ then in the 3D matching $M \circ N$ the woman $c$ will be matched to a dog that she 
prefers at least as much as any dog in $A_b$, therefore guaranteeing that the man $b$ and woman $c$ will never be part of the same blocking triple. Hence, in order for $M \circ N$ to be a 3D stable matching it suffices to ensure that for all $(b,c) \notin N$ we either have $b$ matched to someone better than $c$, meaning $N(b) \succ_b c$, or we have $c$ matched to some $N(c)$ such that $(N(c),b) \in \Rcal_c$. But this is exactly the definition of a stable matching for an instance of \SMG.  Hence we have the following theorem.

\begin{theorem}\label{thm0} \SE\ can be reduced in polynomial time to \SMG.
\end{theorem}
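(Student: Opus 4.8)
The reduction itself has already been spelled out in the discussion preceding the statement: from an \SE\ instance $\Ical$ I would build an \SMG\ instance $\Ical'$ on the very same sets $B$ and $C$, leave each man's complete total order $P(b)$ over the women untouched, and equip every woman $c$ with the relation $\Rcal_c$ of (\ref{set}). The plan is therefore to verify two things about this construction: first that it can be carried out in polynomial time, and second that it is \emph{correct} in the sense that a perfect matching $N$ on $B \times C$ is a stable extension of $M$ if and only if $N$ is a stable matching for $\Ical'$.

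The running-time claim is routine, so I would dispatch it quickly. Each set $A_b = \{a \in A : b \succ_a M(a)\}$ is obtained by a single scan of the dogs' preference lists, and from these the quantities $\alpha(b,c) = \max_{\succ_c} A_b$ (with the stated last-place convention when $A_b = \emptyset$) are read off in polynomial time. Assembling $\Rcal_c$ then amounts to testing the inequality $M(b) \succeq_c \alpha(b',c)$ for each ordered pair of men, which is $O(n^2)$ work per woman. The substance of the proof is thus the correctness equivalence.

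For correctness the key is to show that the two notions of ``blocking'' coincide, since stability in either model is exactly the absence of a blocking configuration. Concretely, I would prove that $M \circ N$ admits a blocking triple if and only if $\Ical'$ admits a blocking pair with respect to $N$. Fix a man $b$ and a woman $c$ with $(b,c) \notin N$. The two ``outer'' requirements for a blocking triple $(a,b,c)$, namely $b \succ_a M(a)$ and $a \succ_c M(N(c))$, say precisely that the dog $a$ lies in $A_b$ and that $c$ prefers $a$ to the dog $M(N(c))$ she receives in $M \circ N$. Such a dog exists exactly when the most-preferred element $\alpha(b,c)$ of $A_b$ satisfies $\alpha(b,c) \succ_c M(N(c))$, which by (\ref{set}) is equivalent to $(N(c),b) \notin \Rcal_c$. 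Conjoining this with the man's condition $c \succ_b N(b)$ shows that $(a,b,c)$ blocks $M \circ N$ for some $a$ if and only if $(b,c)$ blocks $N$ in $\Ical'$; I would also note that the condition $c \succ_b N(b)$ already forces $N(b) \neq c$, so that $(b,c) \notin N$ is automatic and consistent with the paper's remark that a blocking triple puts $a,b,c$ in three disjoint triples.

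The one place demanding care---and the only genuine obstacle---is the degenerate case $A_b = \emptyset$, where $\alpha(b,c)$ was set to the dog in last place on $c$'s list. Here I would check that both directions of the equivalence respect the convention: with $A_b$ empty there is truly no dog that can complete a blocking triple with $b$, and correspondingly a last-place dog can never be strictly preferred to $c$'s actual partner, so the condition $\alpha(b,c) \succ_c M(N(c))$ fails exactly when it ought to. Confirming that this boundary case does not spuriously create or destroy a blocking configuration on either side completes the equivalence, and hence the reduction.
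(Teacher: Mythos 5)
Your proposal is correct and follows essentially the same route as the paper, whose proof of Theorem~\ref{thm0} is precisely the discussion preceding it: the same construction of $A_b$, $\alpha(b,c)$, and $\Rcal_c$ via~\eqref{set}, and the same observation that $(N(c),b)\in\Rcal_c$ exactly captures that no dog in $A_b$ can complete a blocking triple with $(b,c)$. You are in fact somewhat more thorough than the paper, which argues only the sufficiency direction informally; your explicit two-way equivalence of blocking configurations and your check of the degenerate case $A_b=\emptyset$ fill in details the paper leaves implicit.
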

Note that Theorem~\ref{thm0}, together with Theorem~\ref{thm3} imply NP-hardness of SMG, and hence prove Theorem~\ref{thm3} (modulo containment in NP which is straightforward). Nevertheless, we choose to present first the proof of Theorem~\ref{thm1} as a warm-up, as it shares many similarity with that of Theorem~\ref{thm3}.

\section{Results for \SMG}

\subsection{\textsc{NP}-completeness of \SMG\ (Proof of Theorem \ref{thm1})}\label{sec3}

Containment in NP is straightforward, and is based on the observation that deciding whether an edge not in a perfect matching of an instance of \SMG is blocking or not can be done in polynomial time in $n$. The rest of our argument focuses on hardness.

Our proof uses a polynomial time reduction from \SMTI\ where ties occur only in the preference lists of the women. This problem is known to be \textsc{NP}-complete \cite{hard}. Let $\Ical$ be an instance of \textsc{SMTI} where ties 
occur only on the side of the women. We let $B= \lc b_1, \cdots, b_n \rc$ denote the set of men and $C = \lc c_1, \cdots, c_n \rc$ denote the set of women for the instance $\Ical$. For each person $q \in B \cup C$ we let $P(q)$ denote their preference list. 

We now describe how to construct an instance $\Jcal$ of \SMG. The set of men for our instance will be given by  $B' = B  \cup    \lc b_{n+1}  \rc$ and the set of women by $C' = C  \cup    \lc c_{n+1} \rc$. The preferences of the men are defined as 
follows: each original man $b \in B$ ranks the women in $P(b)$ first, in the same order as in $P(b)$, followed by the woman $c_{n+1}$, and the remaining women of $C'$ ranked arbitrarily; the man $b_{n+1}$ ranks the woman $c_{n+1}$ first, and the 
remaining women of $C'$ arbitrarily. Now, for each original woman $c \in C$ we define the binary relation $\Rcal_c \subseteq B \times B$ as follows $\Rcal_{c} := \lc (b,b') \, | \, b,b' \in P(c), b \succeq_{c} b' \rc$. That is, $\Rcal_c$ contains the ordered pair 
$(b,b')$ when both $b$ and $b'$ are acceptable to $c$ under the instance $\Ical$ and $c$ prefers $b$ at least as much as $b'$. 
Finally, for the woman $c_{n+1}$ we set $\Rcal_{c_{n+1}} := \emptyset$. This completes the definition of the instance $\Jcal$.

We use the introduction of the new agents to establish the following property
\begin{lemma}\label{lem2}  In any solution $N$ to $\Jcal$ every men $b \in B$ is matched to a woman from the set $P(b)$.
\end{lemma}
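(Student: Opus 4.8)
The plan is to exploit the two auxiliary agents $b_{n+1}$ and $c_{n+1}$, and in particular the fact that $\Rcal_{c_{n+1}} = \emptyset$. Throughout I use that any stable matching of an $\SMG$ instance is a perfect matching, so the solution $N$ matches every agent of $\Jcal$. The first step is to pin down the partner of $c_{n+1}$. Because $\Rcal_{c_{n+1}}=\emptyset$, the third stability clause $(N(c_{n+1}),b)\in\Rcal_{c_{n+1}}$ can never hold, so a pair $(b,c_{n+1})$ is blocking with respect to $N$ exactly when $(b,c_{n+1})\notin N$ and $c_{n+1}\succ_b N(b)$. Since $b_{n+1}$ ranks $c_{n+1}$ first, were $b_{n+1}$ not matched to $c_{n+1}$ we would have $c_{n+1}\succ_{b_{n+1}} N(b_{n+1})$, and $(b_{n+1},c_{n+1})$ would block $N$. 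Hence every solution satisfies $N(b_{n+1})=c_{n+1}$, and consequently no original man $b\in B$ is matched to $c_{n+1}$.

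Next I would fix an arbitrary original man $b\in B$ and show that stability forces $N(b)\succ_b c_{n+1}$. By the previous step $N(b)\neq c_{n+1}$, so in order for $(b,c_{n+1})$ not to be a blocking pair (again using that the clause involving $\Rcal_{c_{n+1}}$ is vacuous) we must have $N(b)\succeq_b c_{n+1}$; since $N(b)\neq c_{n+1}$ this is the strict relation $N(b)\succ_b c_{n+1}$. Finally I invoke the construction of $b$'s preference list in $\Jcal$: the women ranked above $c_{n+1}$ are precisely those of $P(b)$, since $b$ lists the women of $P(b)$ first, then $c_{n+1}$, then the remaining women of $C'$. Therefore $N(b)\in P(b)$, which is exactly the claim.

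I do not expect a genuine obstacle here, as the conclusion follows directly from the design of the gadget. The only point needing care is bookkeeping with the $\SMG$ notion of stability: one must check that it is the clause $(N(c),b)\notin\Rcal_c$ that governs blocking, and that an empty relation makes this clause vacuously satisfied, so that $c_{n+1}$ effectively blocks with \emph{every} man who would rather be matched to her. Keeping the direction of the preference relation straight is what makes the argument go through cleanly.
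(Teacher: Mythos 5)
Your proof is correct and follows essentially the same route as the paper's: first pin down $N(b_{n+1})=c_{n+1}$ using $\Rcal_{c_{n+1}}=\emptyset$ together with $b_{n+1}$'s top-ranking of $c_{n+1}$, then argue that any original man $b\in B$ would block with $c_{n+1}$ unless $N(b)\succ_b c_{n+1}$, which by the construction of $b$'s list forces $N(b)\in P(b)$. The only difference is cosmetic --- you phrase the second step directly while the paper argues by contradiction --- so nothing further is needed.
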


\begin{proof} Let $N$ be a solution to $\Jcal$. We first show that  $b_{n+1}$ is matched to $c_{n+1}$ in $N$. Suppose by contradiction that $N(b_{n+1}) \neq c_{n+1}$. Then from the way we defined the preferences of $b_{n+1}$ in $\Jcal$ we can conclude that $b_{n+1}$ prefers $c_{n+1}$ to its partner in $N$. Hence, in order for the pair $(b_{n+1},c_{n+1})$ to not be blocking with respect to the solution $N$ we must have $(N(c_{n+1}),b_{n+1} ) \in \Rcal_{c_{n+1}}$, which gives us a contradiction since we defined $\Rcal_{c_{n+1}} = \emptyset$. Hence $N(b_{n+1}) = c_{n+1}$.

We can now prove the lemma. Suppose by contradiction that there exists a man $b \in B$ such that $N(b) \notin P(b)$. Then $b$ cannot be matched to  $c_{n+1}$, since we showed that $c_{n+1}$ is always matched to $b_{n+1}$. From the way 
we defined the preferences of $b$ in $\Jcal$ we can conclude that $b$ prefers $c_{n+1}$ to its partner in $N$. Hence, in order for the pair $(b,c_{n+1})$ to not be blocking with respect to the solution $N$ we must have $(N(c_{n+1}),b ) \in \Rcal_{c_{n+1}}$, which gives us a contradiction since we defined $\Rcal_{c_{n+1}} = \emptyset$. \qed
\end{proof}

The following lemma completes the proof of Theorem \ref{thm1}.
\begin{lemma} $\Ical$ admits a perfect stable matching if and only if $\Jcal$ admits a stable matching.
\end{lemma}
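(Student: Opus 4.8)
The plan is to prove the two directions of the biconditional separately, using the new agents $b_{n+1},c_{n+1}$ and Lemma~\ref{lem2} as the bridge. The key structural observation, established already in Lemma~\ref{lem2}, is that in any solution $N$ to $\Jcal$ every original man $b \in B$ is matched to a woman in $P(b)$, and $b_{n+1}$ is matched to $c_{n+1}$. This means that restricting any solution of $\Jcal$ to the original agents yields a matching that only uses acceptable pairs of $\Ical$, and conversely a perfect stable matching of $\Ical$ can be augmented by the pair $(b_{n+1},c_{n+1})$.

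First I would prove the forward direction: assume $\Ical$ admits a perfect stable matching $N^*$. I would define $N := N^* \cup \{(b_{n+1},c_{n+1})\}$ and verify that $N$ is stable for $\Jcal$. The verification splits into checking that no pair involving a new agent blocks, and that no pair of original agents blocks. For a pair $(b,c)$ with $b,c$ both original and $(b,c)\notin N$, I would translate the non-blocking condition of $\Jcal$ back into that of $\Ical$: if $b$ prefers $c$ to $N(b)$ in $\Jcal$, then since $N(b)\in P(b)$ this preference holds within $P(b)$, so the same held in $\Ical$; stability of $N^*$ then gives $N^*(c) \succeq_c b$, which by definition of $\Rcal_c$ means $(N(c),b)\in\Rcal_c$, so $(b,c)$ is not blocking in $\Jcal$. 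Pairs involving $c_{n+1}$ are handled by noting every original man prefers his partner in $P(b)$ to $c_{n+1}$, and $b_{n+1}$ is already matched to $c_{n+1}$.

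Next I would prove the reverse direction: assume $\Jcal$ admits a stable matching $N$. By Lemma~\ref{lem2} every $b\in B$ satisfies $N(b)\in P(b)$, and $N(b_{n+1})=c_{n+1}$, so $N^* := N \setminus \{(b_{n+1},c_{n+1})\}$ is a matching on $B\times C$ using only acceptable pairs of $\Ical$; since $|B|=|C|=n$ it is a perfect matching. I would then show $N^*$ is stable for $\Ical$ by contradiction: a blocking pair $(b,c)$ in $\Ical$ satisfies $c\succ_b N^*(b)$ and $b\succ_c N^*(c)$ with $(b,c)$ acceptable. The strict preference $b \succ_c N^*(c)$ means $(N^*(c),b)\notin\Rcal_c$ (here I must use that $b$ and $N^*(c)$ are both acceptable to $c$, so the relevant ordered pair is governed by $c$'s tie-allowing order), and $c\succ_b N^*(b)$ carries over to $\Jcal$, so $(b,c)$ would block $N$, a contradiction.

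The main obstacle I expect is the careful bookkeeping in the reverse direction around the semantics of $\Rcal_c$ versus strict preference with ties. Specifically, the claim $(N^*(c),b)\notin\Rcal_c$ requires $b$ to be acceptable to $c$, which is exactly why the forward reasoning forces acceptable partners; I would need to confirm that when $(b,c)$ is acceptable and $b\succ_c N^*(c)$ strictly in $\Ical$, the pair $(N^*(c),b)$ genuinely fails to lie in $\Rcal_c$, rather than being a case where $N^*(c)$ is simply unranked by $c$. Handling the boundary cases (partners outside $P(b)$ or $P(c)$, and the interaction with $c_{n+1}$) is where the argument must be stated precisely, but each such case reduces cleanly to the emptiness of $\Rcal_{c_{n+1}}$ or to Lemma~\ref{lem2}.
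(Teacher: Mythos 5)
Your proposal is correct and follows essentially the same route as the paper's proof: augment a stable matching of $\Ical$ by the pair $(b_{n+1},c_{n+1})$ for one direction, restrict via Lemma~\ref{lem2} for the other, and translate blocking conditions through the definition of $\Rcal_c$ (using mutual acceptability so that both $b$ and $N(c)$ lie in $P(c)$, which you rightly flag as the delicate point). The only cosmetic difference is that you phrase the reverse direction as a proof by contradiction where the paper argues directly; the content is identical.
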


\begin{proof} Suppose that $N$ is a perfect  stable matching for $\Ical$. Then complete $N$ to a perfect matching on $B' \cup C'$ by matching $b_{n+1}$ to $c_{n+1}$. To see that this is a stable matching for $\Jcal$, note that $b_{n+1}$ is matched to its most preferred woman  in $C'$, hence it cannot be part of any blocking pairs. It remains to show that no man in $B$ can be part of a blocking pair. Consider a man $b \in B$, and suppose that $c$ is a woman that $b$ strictly prefers to $N(b)$ according to the preferences in $\Jcal$. Then  it must be the case that $c \in P(b)$ and $b$ also strictly 
prefers $c$ to $N(b)$ in $\Ical$. Since $N$ is a solution to $\Ical$, woman $c$ prefers $N(c)$ at least as 
much as $b$ in $\Ical$. Hence from the way we defined the set $\Rcal_c$ we have $(N(c),b) \in \Rcal_c$, implying that $(b,c)$ is not blocking in $\Jcal$ either.

To see the other direction suppose that $\Jcal$ admits a stable matching, and let $N$ be the part of this stable matching obtained by restricting it to the sets $B \cup C$. It follows from Lemma \ref{lem2} that $N$ is a perfect matching and every man is matched to an acceptable woman. To see that there are no blocking pairs, consider any pair $(b,c) \notin N$ such that $(b,c)$ is an acceptable pair, that is $b \in P(c)$ and $c \in P(b)$. Assume now that in $\Ical$ man $b$ strictly prefers $c$ to $N(b)$. Since $c \in P(b)$ it follows that $b$ also strictly prefers $c$ to $N(b)$ in $\Jcal$. Hence, since $N$ is a stable matching for $\Jcal$, we must have $(N(c),b) \in \Rcal_c$. From the way we defined $\Rcal_c$ this implies that $N(c)$ is acceptable to $c$, and $c$ prefers $N(c)$ at least as much as $b$ in $\Ical$. Therefore $N$ does not have any blocking 
pairs, and is a stable matching in $\Ical$. \qed
\end{proof}

\subsection{Algorithmic results}\label{sec3.2}

In this section we introduce a  variant of the Gale-Shapley man-proposing algorithm for instances of \SMG that have asymmetric preferences. Let $\Ical$ be an \SMG instance as defined in Section \ref{sec2.2}. Like in the classical algorithm, each man in $B$ is originally declared single and is given a list containing all the women of $C$ in order of preference. In each round, every man $b$ that is still single proposes to its most preferred woman in $C$ that is still in his list. If a woman $c$ accepts a proposal from a man $b$ then they 
become engaged, and $b$'s  status changes from single to engaged. On the other hand, if $c$ rejects $b$'s proposal then $b$ removes woman $c$ from his list and remains single. The difference from the original Gale-Shapley algorithm is in the way that the women decide to accept or reject incoming proposals. A woman $c$ accepts a proposal from a man $b$ if and only if $(b,b') \in \Rcal_c$ for all other men $b'$ that have proposed to $c$ up to that point in the algorithm. This will ensure that whenever a woman $c$ rejects a proposal from a man $b$, $c$ is guaranteed to be matched at the end of the algorithm to some $b'$ such that $(b',b) \in \Rcal_c$ therefore ensuring that $(b,c)$ will not be a blocking pair. The description of the algorithm is given below.

\begin{algorithm}\caption{A deferred acceptance algorithm for 
\SMG}\label{alg}
\begin{algorithmic}[1]
\WHILE{there a single man in $B$ with a non-empty list}
\FORALL{$b$  single with a non-empty list} 
\STATE $b$ proposes to the top $c$ in its list
\ENDFOR
\FORALL{$b,c$ such that $b$ proposed to $c$}
\IF{ $(b,b') \in \Rcal_c$ for all $b' \neq b$ that proposed to $c$}
\STATE  $c$ accepts $b$ ( $(b,c)$ become an engaged pair)
\ELSE 
\STATE $c$ rejects $b$
\ENDIF
\ENDFOR
\ENDWHILE
\STATE If the set of engaged pairs forms a perfect matching return this solution, else conclude that $\Ical$ does not admit a stable matching. 
\end{algorithmic}
\end{algorithm}

It is easy to see that Algorithm 1 terminates and runs in polynomial time since each man in $B$ proposes to every woman in $C$ at most once. The following lemma is also easy to establish.

\begin{lemma}\label{lem3} Any solution returned by Algorithm 1 is a  stable matching.
\end{lemma}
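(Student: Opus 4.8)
The plan is to show that if Algorithm 1 returns a set of engaged pairs forming a perfect matching $N$, then $N$ has no blocking pairs. Since a blocking pair in \SMG\ requires an unmatched pair $(b,c)$ with $c \succ_b N(b)$ and $(N(c),b) \notin \Rcal_c$, I would take an arbitrary pair $(b,c) \notin N$ with $c \succ_b N(b)$ and argue that $(N(c),b) \in \Rcal_c$, which rules out blocking.

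\begin{proof}
Suppose Algorithm 1 returns a perfect matching $N$, and let $(b,c) \notin N$ be any pair such that $c \succ_b N(b)$. We must show that $(b,c)$ is not blocking, i.e., that $(N(c),b) \in \Rcal_c$. Since $c \succ_b N(b)$ and $N(b)$ is the woman to whom $b$ is ultimately engaged, woman $c$ appears strictly above $N(b)$ in $b$'s preference list. Because a man only ever removes a woman from his list after being rejected, and because the final partner $N(b)$ must have remained on $b$'s list, man $b$ must have proposed to $c$ at some earlier round and been rejected by her. By the acceptance rule of the algorithm, $b$ was rejected by $c$ precisely because there existed some other man $b'$ who had also proposed to $c$ by that point with $(b,b') \notin \Rcal_c$.

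The key step is to track what happens to $c$ after she rejects $b$: I claim her final partner $N(c)$ satisfies $(N(c),b) \in \Rcal_c$. Observe that whenever $c$ rejects a man, she is currently engaged (or becomes engaged) to a man she prefers at least as much, in the sense captured by $\Rcal_c$; more carefully, at the round $c$ rejects $b$, she accepts (or retains) some man $b^{*}$ satisfying $(b^{*}, b) \in \Rcal_c$, since the acceptance condition requires $(b^{*}, b) \in \Rcal_c$ for $b^*$ to be accepted over the competing proposal of $b$. The main obstacle is that the woman's engaged partner can change over subsequent rounds, and unlike the classical Gale-Shapley setting the relation $\Rcal_c$ need not be transitive, so I cannot simply chain the comparisons. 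This is exactly where asymmetry is used: I would argue that once $c$ has received a proposal from $b$, the man $b$ is permanently ineligible to be accepted by $c$ again, so every subsequent engaged partner $b^{*}$ of $c$ is a man who, at the round he was accepted, beat out $b$ under the acceptance rule, yielding $(b^{*}, b) \in \Rcal_c$; in particular the final partner $N(c)$ satisfies $(N(c),b) \in \Rcal_c$.

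Concretely, I would establish as an invariant that at every point after $c$ has rejected $b$, her current engaged partner $b^{*}$ satisfies $(b^{*},b) \in \Rcal_c$. This holds at the moment of rejection by the acceptance rule. For the inductive step, if $c$ later replaces $b^{*}$ by accepting a new proposer $b^{**}$, then at that round $b$'s proposal is among those $c$ has received (proposals are never forgotten by the woman), so the acceptance condition for $b^{**}$ forces $(b^{**}, b) \in \Rcal_c$, preserving the invariant. Since $N(c)$ is the partner of $c$ at termination, the invariant gives $(N(c),b) \in \Rcal_c$, so $(b,c)$ is not blocking. As $(b,c)$ was an arbitrary pair not in $N$ with $c \succ_b N(b)$, the matching $N$ has no blocking pairs and is therefore stable. \qed
\end{proof}
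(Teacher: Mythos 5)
Your proof is correct and takes essentially the same approach as the paper's: a blocking pair $(b,c)$ forces $b$ to have proposed to $c$ and been rejected, and the acceptance rule --- quantified over \emph{all} men who have proposed to $c$ so far --- yields exactly your invariant that any partner $c$ holds after rejecting $b$ satisfies $(b^{*},b) \in \Rcal_c$, hence $(N(c),b) \in \Rcal_c$ at termination. One small correction: your remark that ``this is exactly where asymmetry is used'' is misplaced --- your inductive step never actually invokes asymmetry, and indeed Lemma~\ref{lem3} holds for general (not necessarily asymmetric) instances, with asymmetry only entering later, in the proof of Theorem~\ref{thm2}, to ensure no woman accepts more than one proposal and that rejection of a stable partner leads to a contradiction.
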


\begin{proof} Let $N$ be a solution returned by Algorithm 1. It then follows that $N$ is a perfect matching. Suppose there is a blocking pair $(b,c)$. Then $b$ must have proposed to $c$ in Algorithm 1, and $c$ rejected $b$'s proposal. But, every time a woman $c$ rejects a proposal from $b$, Algorithm 1 ensures that $c$ will only become engaged to some man $b'$ such that $(b',b) \in \Rcal_c$. Therefore we cannot have any blocking pairs. \qed
\end{proof}

\noindent \textbf{Proof of Theorem \ref{thm2}.}  Using Lemma \ref{lem3} it suffices to show that if Algorithm 1 does not find a solution then the given instance does not admit a stable matching. Suppose by contradiction that there exists a stable matching $N$ but Algorithm 1 does not find a solution. First note that since the preferences are asymmetric, no woman $c \in C$ accepts more then one proposal at any point in the algorithm. Hence every woman is engaged to at most one man. Now since Algorithm 1 does not find a solution there is a man $b$ that is rejected by every woman in $C$. In particular, $b$ is rejected by $N(b)$. Among all pairs $(b,c) \in N$ such that $b$ proposed to $c$  and $c$ rejected $b$, let $(b_0,c_0)$ 
be the one that corresponds to the earliest rejection. As observed earlier, no man is rejected because of an arbitrary choice. Therefore, if $b_0$ was rejected at $c_0$ then there 
must be some man $b_1$ that also proposed to $c_0$ and $(b_0,b_1) \notin \Rcal_{c_0}$. Let $c_1$ be the partner of $b_1$ in $N$. We must have $c_1 \succ_{b_1} c_0$, since otherwise $(b_1,c_0)$ would be a blocking pair for the stable matching $N$. Thus $b_1$ proposed to $c_1$  before proposing to $c_0$ and $c_1$ rejected $b_1$ before $c_0$ rejected $b_0$. But this contradicts our choice of $(b_0,c_0)$. This concludes the proof of the theorem. \qed

\subsection{A polyhedral characterization}\label{sec3.3}

In this section we provide a polyhedral description for \SMG, that is an analogue of the well studied stable marriage polytope, first introduced in \cite{vande1989linear}. It is well known that the latter polytope is integral, meaning that the optimization version of \SM\ can be solved in polynomial time. For our setting, we show that our polytope can be used to efficiently decide the feasibility of an \SMG\ instance with asymmetric preferences, thus giving an  alternative proof of Theorem \ref{thm2}. We remark however that our polytope is not integral for this class of instances. Indeed, one can easily find instances with asymmetric preferences for which our polytope has fractional extreme points.

Given an instance $\Ical$ of \SMG\ we associated with each pair $(b,c) \in B \times C$ a variable $x_{bc}$, with the intended meaning that $x_{bc}=1$ if $b$ and $c$ are matched to each other and $x_{bc} =0$ otherwise. We then consider the following relaxation of the problem and let $P(\Ical)$ denote the set of all vectors  satisfying the constraints below
\begin{align}\label{P}
\sum_{c } x_{bc} &= 1 \quad \forall b  \in B\\
\sum_{b } x_{bc} &= 1 \quad \forall c  \in C\\
x_{bc} + \sum_{c' \succ_b c  } x_{bc'} + \sum_{(b',b) \in \Rcal_c} 
x_{b'c} &\geq 1 \quad \forall b \in B,  c \in C \\
x_{bc}&\geq 0 \quad \forall b \in B,c \in C 
\end{align}
It is easy to check that $x$ is the incidence vector of a stable matching for $\Ical$ if and only if $x$ is an integer vector in $P(\Ical)$. Our main result is the following.

\begin{theorem}\label{thm4} Let $\Ical$ be an instance of \SMG\ with asymmetric preferences. Then $P(\Ical) \neq \emptyset$ if and only if $\Ical$ admits a stable matching. Furthermore any fractional point $x \in P(\Ical)$ can be efficiently rounded to a stable matching solution for $\Ical$.
\end{theorem}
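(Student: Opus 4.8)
The plan is to prove both directions of the equivalence together with the rounding procedure, since the hard direction (non-emptiness implies existence of a stable matching) will be established constructively by rounding. One direction is immediate: if $\Ical$ admits a stable matching then its incidence vector is an integer point of $P(\Ical)$, so $P(\Ical) \neq \emptyset$. For the converse and the rounding claim, I would start from an arbitrary fractional $x \in P(\Ical)$ and produce an integral stable matching. The natural guess is that the constraints~(\ref{P}) are a fractional relaxation of exactly the stable-matching conditions, and that asymmetry is what prevents the third family of inequalities from being satisfied ``by cancellation''.

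The key idea I would pursue is to reduce to Algorithm~1 by extracting enough combinatorial information from $x$ to drive a deferred-acceptance argument, or, more cleanly, to run a direct rounding. Concretely, I would first observe that each man $b$ has $\sum_c x_{bc}=1$, so I can think of $x$ as assigning fractional ``proposal mass'' to women in order of $b$'s preference. The third constraint, rewritten as $\sum_{c' \succeq_b c} x_{bc'} + \sum_{(b',b)\in\Rcal_c} x_{b'c} \ge 1$, says that either $b$ puts all his remaining mass on women he likes at least as much as $c$, or $c$ receives mass from men she ranks at least as highly as $b$. My plan is to define, for each man, a threshold woman and round $b$ to the top woman on whom his cumulative mass reaches $1$; then I would argue that asymmetry forces the resulting assignment to be a genuine matching and to satisfy the stability inequalities integrally. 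Alternatively, and perhaps more robustly, I would feed the support structure of $x$ into the proof of Theorem~\ref{thm2}: show that if Algorithm~1 fails on an instance, then the rejection-chain contradiction derived in the proof of Theorem~\ref{thm2} can be rephrased as a violated inequality of~(\ref{P}), so that failure of the algorithm certifies $P(\Ical)=\emptyset$. Combined with Lemma~\ref{lem3}, this yields $P(\Ical)\neq\emptyset \Rightarrow \Ical$ has a stable matching, and the rounding is just the output of Algorithm~1.

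The main obstacle I anticipate is that $P(\Ical)$ is \emph{not} integral, so I cannot simply invoke total unimodularity or an integral-polytope theorem as one does for the classical stable-marriage polytope; the rounding must genuinely use asymmetry, not merely the LP structure. In particular I expect the delicate point to be verifying that the rounded solution satisfies the third constraint integrally for \emph{every} pair $(b,c)$, including pairs outside the support of $x$. Here asymmetry should be essential: because at most one of $(b',b),(b,b')$ lies in $\Rcal_c$, the women's acceptance rule is unambiguous, and the fractional certificate on each pair can be ``pushed'' to an integral certificate without two competing men both claiming priority. I would therefore structure the argument so that the asymmetry hypothesis is invoked precisely at the step where a fractional satisfaction of constraint~(\ref{P}) is converted into the integral conditions ``$N(b)\succeq_b c$ or $(N(c),b)\in\Rcal_c$'', mirroring the contradiction in the proof of Theorem~\ref{thm2}.
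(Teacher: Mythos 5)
Your overall instinct for the hard direction --- read off an integral matching from the support of $x$ and invoke asymmetry to show it is a genuine matching --- is exactly the paper's strategy, but the one concrete rounding rule you commit to picks the wrong end of the support. You propose to match $b$ to ``the top woman on whom his cumulative mass reaches $1$''; accumulating $x_{bc}$ in order of $b$'s preference, the cumulative sum first reaches $1$ at $b$'s \emph{least} preferred woman with positive mass. The paper's rule is the opposite extreme: $f(b)$ is $b$'s \emph{most} preferred woman in $\lc c \in C : x_{bc} > 0 \rc$, and this choice is what makes the whole deduction chain fire. Indeed, $f(b)=c$ forces $\sum_{c' \succ_b c} x_{bc'} = 0$, so the stability constraint for $(b,c)$ gives $x_{bc} + \sum_{(b',b)\in\Rcal_c} x_{b'c} \geq 1$, and the matching constraint for $c$ upgrades this to $\sum_{(b',b)\notin\Rcal_c} x_{b'c} = 0$: every other man with positive mass on $c$ is related above $b$ in $\Rcal_c$. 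If $f(b_1)=f(b_2)=c$ this yields both $(b_1,b_2)\in\Rcal_c$ and $(b_2,b_1)\in\Rcal_c$, contradicting asymmetry; and for any pair with $c \succ_b f(b)$ the same two constraints place all of $c$'s unit mass on men $b'$ with $(b',b)\in\Rcal_c$, so since $N=\lc (b,f(b)) : b \in B\rc$ lies in the support of $x$ we get $(N(c),b)\in\Rcal_c$ and $(b,c)$ is not blocking. With the bottom-of-support rule none of this goes through: $\sum_{c' \succ_b c} x_{bc'}$ can be close to $1$, the constraint for $(b,c)$ can be satisfied entirely by $b$'s own mass, and you retain no control over whom $c$ is fractionally matched to. Note also that asymmetry is used exactly once, to prove $N$ is a matching --- not, as you anticipated, at the stability-verification step, which uses nothing beyond the LP constraints and the fact that $N$ lives in the support of $x$.

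Your fallback route is circular as stated: the claim that a failure of Algorithm 1 ``can be rephrased as a violated inequality'' of the system defining $P(\Ical)$ \emph{is} the hard direction of the theorem, and you give no mechanism for it. In particular, emptiness of $P(\Ical)$ cannot be witnessed by exhibiting a single violated inequality --- a hypothetical $x \in P(\Ical)$ satisfies every inequality by definition --- so you would need a Farkas-type nonnegative combination of the constraints extracted from the rejection chain. That is a substantially different (and harder) argument than the proof of Theorem \ref{thm2}, where the chain is driven by an integral stable matching $N$, not by a fractional point; it is unclear how a fractional $x$ would induce the ordered sequence of proposals and rejections your argument needs. So as written, neither route closes the gap: the missing idea is precisely the top-of-support choice of $f(b)$, combined with the observation that the matching constraint for woman $c$ converts fractional satisfaction of the stability inequality into the support condition $\sum_{(b',b)\notin\Rcal_c} x_{b'c} = 0$.
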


\begin{proof}  The first direction is trivial since if $\Ical$ admits a stable matching then the incidence vector corresponding to this stable matching is clearly in $P$. Now assume $P \neq \emptyset$ and let $x$ be any point in $P$.  We will show how to efficiently round $x$ to a stable matching, thus completing the proof of the theorem. For each $b \in B$ let $f(b)$ be $b$'s most preferred woman in the set $\lc c \in C : x_{bc} > 0 \rc$. Define $N= \lc (b,f(b)): b \in B \rc$. We first show that $N$ is a perfect matching. Since each man selects exactly one woman it suffices to show that no two men select the same woman. Suppose by contradiction that $f(b_1) = f(b_2) = c$ for some $b_1 \neq b_2$. Note that 
\ba
f(b) = c \quad &\Rightarrow \sum_{c' \succ_b c} x_{bc'} = 0 \quad 
\text{ from the definition of $f(b)$}\\
&\Rightarrow x_{bc} + \sum_{(b',b) \in \Rcal_c} x_{b'c} \geq 1 \quad 
\text{ from the stability constraint for $(b,c)$}\\
&\Rightarrow x_{bc} + \sum_{(b',b) \in \Rcal_c} x_{b'c} = 1 \quad 
\text{ from the matching constraint for $c$} \\
&\Rightarrow \sum_{(b',b) \notin \Rcal_c} x_{b'c} = 0.
\ea
Therefore $f(b_1) = f(b_2) = c$ implies that $(b_1,b_2) \in \Rcal_c$ and $(b_2,b_1) \in \Rcal_c$. But this contradicts the assumption that the preferences are asymmetric. Hence $N$ must be a perfect matching. To see that $N$ satisfies the stability constraints consider any pair $(b,c)$. If $f(b)=c$ or $f(b) \succ_b c$ then $(b,c)$ cannot be blocking, since $b$ will be matched in $N$ to someone he prefers at least as much as $c$. Hence it suffices to consider the case where $c \succ_b f(b)$. But then we must have $x_{bc} + \sum_{c' \succ_b c} 
x_{bc'} = 0$ and since $x \in P(\Ical)$ this implies that $\sum_{(b',b) \in \Rcal_c } x_{b'c}=1$. Now, since $N$ uses only edges in the support of $x$, it follows that $(N(c),b) \in \Rcal_c$  and hence $(b,c)$ is not blocking. Therefore $N$ is a stable matching for $\Ical$. \qed
\end{proof}

\section{\textsc{NP}-completeness of \SE\ (Proof of Theorem \ref{thm3}) }\label{sec4}

In this section we provide the proof of Theorem \ref{thm3}.  At a high level, our proof adopts a similar strategy as that of Theorem~\ref{thm1}. In particular, we will provide a polynomial-time reduction for \SMTI\ to \SE. However, due to the additional structural properties of \SE\ instances  this reduction becomes significantly more intricate. The full proof can be found below.

\begin{proof} 
Containment of \SE\ in NP follows by observing that deciding whether any triple not in a 3D matching of any instance (extending the given 2D matching) is blocking or not, can be done in polynomial time in $n$. The rest of the argument focuses on hardness. Let $\Ical$ be an instance of \SMTI\ where ties occur only on the side of the women. We let $B= \lc b_1, \cdots, b_n \rc$ denote the set of men and $C = \lc c_1, \cdots, c_n \rc$ denote the set of women for the instance $\Ical$. For each person $q \in B \cup C$ let $P(q)$ denote their preference list. For a given woman $c_i \in C$ the preference list $P(c_i)$ might have several men tied at a certain position. We let $t_i$ be the number of positions in $c_i$'s preference list, so that $1 \leq t_i \leq n$. Now for each $j \in \lc 1, \cdots, t_i \rc$ we let $P(c_i)_j$ denote the men that are tied in position $j$ in $c_i$'s preference 
list. Furthermore, for each $b \in P(c_i)$ we let $r_{c_i}(b)$ denote the position that $b$ occupies in $P(c_i)$.

We now describe how to construct an instance $\Jcal$ of \SE. Our instance will consist of three sets $A'$, $B'$ and $C'$. We will include all the elements of $B$ in  $B'$, and all the elements of $C$ in $C'$. We will also define some extra  elements as follows
\bi
\item[--] For all $i \in \lc 1,2, 3 \rc$ we create the new elements $a_{n+i}$, $b_{n+i}$ and $c_{n+i}$.
\item[--] For all $i \in [n]$ we create $t_i$ new elements $B_i = \lc b_{i,1}, \cdots, b_{i,t_i} \rc$, together with the corresponding sets $A_i = \lc a_{i,1}, \cdots, a_{i,t_i} \rc$, and $C_i = \lc c_{i,1}, \cdots, c_{i,t_i} \rc $.
\item[--] We create a set of $n$ new elements  $A = \lc a_1, \cdots, a_n \rc$, corresponding to the original sets $B$ and $C$.  
\ei

We then set
\ba
A' &= A \cup  \bigcup_{i=1}^{n} A_i  \cup  \lc a_{n+1}, a_{n+2}, 
a_{n+3}  \rc\\
B' &= B  \cup  \bigcup_{i=1}^{n} B_i  \cup  \lc b_{n+1}, b_{n+2}, 
b_{n+3}  \rc\\
C' &= C  \cup  \bigcup_{i=1}^{n} C_i   \cup  \lc c_{n+1}, c_{n+2}, 
c_{n+3}  \rc
\ea
We now fix the perfect matching $M$ on $A' \times B'$ by matching each dog of $A'$ to its corresponding man in $B'$. That is $a_i$ is matched to $b_i$ for all $i \in [n]$, $a_{i,j}$ is matched to $b_{i,j}$ for all $i \in [n]$ and $j \in [t_i]$, and $a_{n+i}$ is matched to $b_{n+i}$ for all $i \in \lc 1,2,3 \rc$.

Next, we create the preferences lists of each agent. For each $q \in A' \cup B' \cup C'$ we let $P'(q)$ denote the preference list of agent $q$ for the instance $\Jcal$. Recall that for each $a \in A'$ the preference list $P'(a)$ must be a complete and strict ordering of the set $B'$. Similarly for each $b \in B'$ the list $P'(b)$ will be a complete and strict ordering of the set $C'$, and for each $c \in C'$ the list $P'(c)$ will be a complete and strict ordering of the set $A'$. Table \ref{table} summarizes the preferences of each agent. 
When listing the preference list $P'(q)$ of an agent $q$, we use the notation $[Q]$ to denote that the agents of the set $Q$ appear in consecutive positions in $P'(q)$ in any arbitrary order among them. The notation $[\cdots]$ is used to denote that the remaining agents that have not been listed in $P'(q)$ appear in consecutive positions in $P'(q)$ in any arbitrary order among them.

The preferences of the dogs in $A'$ over the set of men $B'$ are defined as follows  
\bi
\item[--] For all $i \in [n]$: $a_i$ ranks $b_i$ first, and the remaining men of $B'$ arbitrarily.
\item[--] For all $i \in [n]$ and for all $j \in [t_i]$: $a_{i,j}$ ranks the men  that are tied in position $j$ of women $c_i$'s list (in the instance $\Ical$) at the top of its list in any arbitrary order among them, followed by the man $b_{i,j}$. The remaining men of $B'$ are ranked arbitrarily.
\item[--] $a_{n+1}$ ranks $b_{n+1}$ first, $a_{n+2}$ ranks $b_{n+2}$ last, and $a_{n+3}$ ranks $b_{n+2}$ first followed by $b_{n+3}$ second. The rest of their lists are arbitrary.
\ei

\begin{table}
\begin{center}
    \begin{tabular}{ | c | c |}
    \hline
    &  $P'(a_i) = b_i, [\cdots]$ \\  
    $ \forall i \in [n]$ & $P(b_i), c_{n+1}, [\cdots]$ \\  
    & $P'(c_i) = [M\lb P(c_i)_1 \rb], a_{i,1},  \cdots, [M\lb 
P(c_i)_{t_i} \rb], a_{i,t_i}, [\cdots]$\\ \hline
    &  $P'(a_{i,j}) = [P(c_i)_j], b_{i,j}, [\cdots]$ \\  
    $ \forall i \in [n], $ & $P'(b_{i,j}) = c_{i,j}, [\cdots]$ \\  
    $\forall j \in [t_i]$ & $P'(c_{i,j}) =  a_{n+2},[\cdots]$ \\  
\hline 
    & $P'(a_{n+1})= b_{n+1},  [\cdots]$ \\
    & $P'(a_{n+2})= [\cdots], b_{n+2} $ \\
    & $P'(a_{n+3})= b_{n+2},b_{n+3},  [\cdots]$ \\ \hline 
    $\forall i \in [3]$ & $P'(b_{n+i})= c_{n+i},[\cdots]$  \\ \hline 
    & $P'(c_{n+1})= a_{n+2},[\cdots]$ \\
    & $P'(c_{n+2})= a_{n+3},[\cdots]$ \\
    & $P'(c_{n+1})= a_{n+2},[\cdots]$ \\ \hline
    \end{tabular}
\caption{The preferences for the instance $\Jcal$.} \label{table}
\end{center}
\end{table}

The preferences of the men in $B'$ over the set of women $C'$ are defined as follows
\bi
\item[--] For all $i \in [n]$: $b_i$ ranks the women in $P(b_i)$ first, in the same order as in $P(b_i)$, followed by the woman $c_{n+1}$, and the remaining women of $C'$ ranked arbitrarily.
\item[--] For all $i \in [n]$ and for all $j \in [t_i]$: $b_{i,j}$ ranks $c_{i,j}$ first, and the remaining women of $C'$ arbitrarily.
\item[--] For all $i \in \lc 1,2, 3 \rc$: $b_{n+i}$ ranks $c_{n+i}$ first and the remaining women of $C'$ arbitrarily.
\ei

The preferences of the women in $C'$ over the set of dogs $A'$ are defined as follows
\bi
\item[--] For all  $i \in [n]$:  $c_i$ ranks the  dogs in the set $M\lb P(c_i)_1 \rb$ at the top of its list in an arbitrary order, followed by the dog $a_{i,1}$, then the dogs in the set $M\lb P(c_i)_2 \rb$ again in an arbitrary order, followed by the dog $a_{i,2}$, and so on until the dogs in  the set $M\lb P(c_i)_{t_i} \rb$ followed by the  $a_{i,t_i}$. The rest of $c_i$'s preference list is arbitrary.
\item[--] For all $i \in [n]$ and for all $j \in [t_i]$: $c_{i,j}$ ranks $a_{n+2}$ first, and the remaining dogs in $A'$ arbitrarily. 
\item[--] $c_{n+1}$ and $c_{n+3}$ rank $a_{n+2}$ first, and the remaining dogs in $A'$ arbitrarily, while $c_{n+2}$ ranks $b_{n+3}$ first and the remaining dogs in $A'$ arbitrarily.
\ei

This concludes the description of the instance $\Jcal$. Note that since we already fixed the perfect matching $M$, the sets $\lc A_b: b \in B' \rc$ are now fully determined as follows
\begin{align}\label{setAb}
A_b = \begin{cases} 
\lc a_{i,r_{c_i}(b)} : i \in [n], b \in P(c_i) \rc \cup \lc a_{n+2} \rc &\text{ if $b \in B$,} \\ 
\lc a_{n+2} \rc &\text{ if $b \in B' \bb \lb B \cup \lc b_{n+2} \rc \rb, $} \\
\lc a_{n+3} \rc &\text{ if $b= b_{n+2}$.} \\
\end{cases}
\end{align}

This also fixes the sets $\Rcal_c$ for all $c \in C$ using the definition given in equation \eqref{set} of Section \ref{se}. In particular, the sets $\Rcal_c$ satisfy the following property.

\begin{lemma}\label{lem4} For every  $c \in C$ and $b,b' \in B$  where $b$ is an acceptable partner of $c$ in $\Ical$, we have $(b',b) \in \Rcal_c$ if and only if $b'$ is also an acceptable partner of $c$ in $\Ical$ and $c$ prefers $b'$ at least as much as $b$.
\end{lemma}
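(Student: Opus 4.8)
The plan is to unfold the definition $(b',b)\in\Rcal_c\iff M(b')\succeq_c\alpha(b,c)$ from \eqref{set}, the key preparatory step being an explicit computation of $\alpha(b,c)$ under the hypothesis that $b$ is acceptable to $c$. Fix $c=c_i$ and an acceptable $b\in P(c_i)$, and write $j_0=r_{c_i}(b)$. From \eqref{setAb}, every dog of $A_b$ is either a double-indexed dog $a_{k,\cdot}$ or the dog $a_{n+2}$; in particular $A_b$ contains no single-indexed dog $a_m$, and the only dog of the form $a_{i,\cdot}$ that it contains is $a_{i,j_0}$, since $a_{i,j}\in A_b$ forces $j=r_{c_i}(b)=j_0$. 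On the other hand, $P'(c_i)$ ranks at its top only the single-indexed dogs of the blocks $M\lb P(c_i)_1\rb,\dots,M\lb P(c_i)_{t_i}\rb$, with the dogs $a_{i,1},\dots,a_{i,t_i}$ interleaved after the corresponding block, and relegates every remaining dog (all $a_{k,\cdot}$ with $k\neq i$, the dog $a_{n+2}$, and so on) to the arbitrary tail. Consequently $a_{i,j_0}$ is the unique member of $A_b$ lying in the explicitly ordered prefix of $P'(c_i)$, so it is preferred by $c_i$ to every other member of $A_b$, giving $\alpha(b,c_i)=a_{i,j_0}$. I expect this identity to be the crux of the argument; the one delicate point is verifying that no $a_{k,\cdot}$ with $k\neq i$ can sit above $a_{i,j_0}$, which holds because such dogs never appear in the ordered prefix of $c_i$'s list.

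With $\alpha(b,c_i)=a_{i,j_0}$ established, I would then locate $M(b')$ inside $P'(c_i)$. Writing $b'=b_k$ so that $M(b')=a_k$ is a single-indexed dog, two cases arise. If $b'$ is acceptable to $c_i$, then $a_k$ lies in the block $M\lb P(c_i)_{j'}\rb$ with $j'=r_{c_i}(b')$, and by the interleaved structure of $P'(c_i)$ this block lies at or above $a_{i,j_0}$ exactly when $j'\le j_0$. If $b'$ is not acceptable to $c_i$, then $a_k$ lies in the arbitrary tail, strictly below $a_{i,j_0}$. Hence $M(b')\succeq_{c_i}a_{i,j_0}$ holds precisely when $b'$ is acceptable to $c_i$ and $r_{c_i}(b')\le r_{c_i}(b)$, which by the definition \eqref{set} is precisely the condition $(b',b)\in\Rcal_{c_i}$.

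Finally I would translate the position inequality back into the language of $\Ical$: since a smaller rank means a more preferred man, $r_{c_i}(b')\le r_{c_i}(b)$ is equivalent to $c_i$ preferring $b'$ at least as much as $b$ (ties yielding equality of ranks, which still counts as ``at least as much''). Combining this with the acceptability condition isolated above yields exactly the claimed equivalence, completing the proof. The routine bookkeeping lies in the block-by-block comparison of positions in $P'(c_i)$; the genuinely load-bearing observation is the structural fact that $A_b$ meets the ordered prefix of $P'(c_i)$ in the single dog $a_{i,j_0}$.
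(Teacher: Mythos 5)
Your proposal is correct and follows essentially the same route as the paper's proof: both reduce the lemma to the identity $\alpha(b,c_i)=a_{i,r_{c_i}(b)}$, obtained from \eqref{setAb} together with the interleaved structure of $P'(c_i)$, and then unfold the definition \eqref{set} of $\Rcal_c$. The only difference is one of detail, not of substance: the paper treats the final equivalence as immediate, whereas you spell out the location of $M(b')$ in $P'(c_i)$ (acceptable $b'$ sit in block $r_{c_i}(b')$ of the ordered prefix, unacceptable ones in the arbitrary tail), which is a faithful expansion of the paper's last sentence.
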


\begin{proof} Let $(b,c) \in B \times C$ such that $(b,c)$ is an acceptable pair for the instance $\Ical$. Then $b$ must appear in the preference list $P(c)$ of woman $c$. Recall that $r_c(b)$ denotes the position that $b$ occupies in $P(c)$. Then it follows from the description of $A_b$ in \eqref{setAb} that the only dog from the set $\lc a_{i,1}, \cdots, a_{i,t_i} \rc$ that is contained in $A_b$ is  $a_{i,r_c(b)}$. It then follows from the way we defined the preference list $P'(c_i)$ that $\alpha(b,c_i) = a_{i,r_c(b)}$. The 
lemma then follows form the definition of the sets $\Rcal_c$. \qed
\end{proof}

The proof of the Theorem is completed via the following Lemmas.

\begin{lemma}\label{lem5} In any solution to $\Jcal$ $b_{n+2}$ and $b_{n+3}$  are matched to  $c_{n+2}$ and $c_{n+3}$. 
\end{lemma}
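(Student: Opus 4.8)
The plan is to exploit how rigidly the preferences of the six gadget agents $a_{n+2},a_{n+3},b_{n+2},b_{n+3},c_{n+2},c_{n+3}$ are wired, together with the sets $A_{b}$ already computed in \eqref{setAb}: we have $A_{b_{n+2}}=\lc a_{n+3}\rc$ and $A_{b_{n+3}}=\lc a_{n+2}\rc$. Since a man $b$ can only participate in a blocking triple together with a dog in $A_b$, this immediately restricts $b_{n+2}$ to block only with $a_{n+3}$, and $b_{n+3}$ to block only with $a_{n+2}$. The first thing I would record are two facts that hold in every extension $\Mcal=M\circ N$: because $\Mcal(a_{n+3})=b_{n+3}$ while $a_{n+3}$ ranks $b_{n+2}$ first and $b_{n+3}$ second, we always have $b_{n+2}\succ_{a_{n+3}}\Mcal(a_{n+3})$; and because $\Mcal(a_{n+2})=b_{n+2}$ is the last man on $a_{n+2}$'s list, we always have $b\succ_{a_{n+2}}\Mcal(a_{n+2})$ for every $b\neq b_{n+2}$. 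Thus the dog-condition of these two triples is automatically satisfied.

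With these in hand I would analyse the two aligned triples $(a_{n+3},b_{n+2},c_{n+2})$ and $(a_{n+2},b_{n+3},c_{n+3})$. Using that $b_{n+2}$ ranks $c_{n+2}$ first, $b_{n+3}$ ranks $c_{n+3}$ first, $c_{n+2}$ ranks $a_{n+3}$ first and $c_{n+3}$ ranks $a_{n+2}$ first, only the man- and woman-conditions remain, and each is governed by a single first-choice relation. Requiring that $(a_{n+3},b_{n+2},c_{n+2})$ is not blocking forces $N(b_{n+2})=c_{n+2}$ or else $\Mcal(c_{n+2})=a_{n+3}$, i.e.\ $N(c_{n+2})=M(a_{n+3})=b_{n+3}$; symmetrically, non-blocking of $(a_{n+2},b_{n+3},c_{n+3})$ forces $N(b_{n+3})=c_{n+3}$ or $N(c_{n+3})=b_{n+2}$. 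Combining the two dichotomies (and using that a single man cannot be matched to two women) leaves exactly two surviving configurations: the desired one, $N(b_{n+2})=c_{n+2}$ and $N(b_{n+3})=c_{n+3}$, or the \emph{swap}, $N(b_{n+2})=c_{n+3}$ and $N(b_{n+3})=c_{n+2}$.

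The decisive step, and the one I expect to be the main obstacle, is eliminating the swap. Its difficulty is that in the swap both $c_{n+2}$ and $c_{n+3}$ receive their respective top dogs $a_{n+3}$ and $a_{n+2}$, so no blocking triple can be produced through $b_{n+2}$ or $b_{n+3}$ directly. To break it I would invoke the auxiliary gadget $a_{n+1},b_{n+1},c_{n+1}$ and the \emph{universal desirability} of $a_{n+2}$: in the swap $a_{n+2}$ is matched to its worst man $b_{n+2}$, yet $a_{n+2}$ is the first choice of $c_{n+1}$ and of every $c_{i,j}$, while $b_{n+1}$ ranks $c_{n+1}$ first. Since $N(c_{n+1})\neq b_{n+2}$ in the swap, the triple $(a_{n+2},b_{n+1},c_{n+1})$ is blocking unless $N(b_{n+1})=c_{n+1}$, and the same reasoning applied to each $b_{i,j}$ (whose only admissible blocking dog is again $a_{n+2}$ by \eqref{setAb}) forces $N(b_{i,j})=c_{i,j}$. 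I would then chase this cascade of forced edges, together with the constraint on $\Rcal_c$ recorded in Lemma~\ref{lem4}, to expose a woman who strictly prefers the over-subscribed dog $a_{n+2}$ and simultaneously admits a man preferring her to his partner, contradicting stability. The delicate point throughout is that every block must be argued using first-choice relations only, so that the conclusion is independent of the arbitrarily ordered tails ``$[\cdots]$'' of the preference lists; this is exactly what makes the bookkeeping intricate. Once the swap is excluded, the surviving configuration $N(b_{n+2})=c_{n+2}$, $N(b_{n+3})=c_{n+3}$ is the claim.
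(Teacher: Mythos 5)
Your first two paragraphs already constitute a complete and correct proof of the lemma, and by essentially the paper's own route. The paper's proof is precisely the contradiction form of your dichotomy: assuming one of $b_{n+2},b_{n+3}$ is matched outside $\lc c_{n+2},c_{n+3}\rc$, it uses the first-choice structure (via $\alpha(b_{n+2},c_{n+2})=a_{n+3}$, $\alpha(b_{n+3},c_{n+3})=a_{n+2}$ and the induced relations $\Rcal_{c_{n+2}},\Rcal_{c_{n+3}}$) to force $N(c_{n+2})=b_{n+3}$ and then $N(c_{n+3})=b_{n+2}$, contradicting the assumption; your direct analysis of the triples $(a_{n+3},b_{n+2},c_{n+2})$ and $(a_{n+2},b_{n+3},c_{n+3})$ is the same computation phrased in the 3D language. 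Crucially, your conclusion that only the aligned configuration or the swap survives \emph{is} the lemma: the statement asserts only that $\lc b_{n+2},b_{n+3}\rc$ is matched to $\lc c_{n+2},c_{n+3}\rc$ as sets, and this set-wise form is all that is used downstream -- Lemma \ref{lem6} invokes only that $b_{n+2}$ is matched to either $c_{n+2}$ or $c_{n+3}$.

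Your third paragraph therefore chases a strictly stronger claim than is stated or needed, and that claim is in fact unprovable from the construction: whether the swap blocks anything depends on the arbitrary tails $[\cdots]$, which the reduction leaves unconstrained. Take the admissible choice $P'(b_{n+2})=c_{n+2},c_{n+3},[\cdots]$ and $P'(b_{n+3})=c_{n+3},c_{n+2},[\cdots]$. In the swap, $c_{n+2}$ holds her top dog $a_{n+3}$ and $c_{n+3}$ holds her top dog $a_{n+2}$, so neither woman can appear in any blocking triple; the only woman $b_{n+2}$ (respectively $b_{n+3}$) prefers to his partner is $c_{n+2}$ (respectively $c_{n+3}$), whose dog-condition fails; and since $A_{b_{n+2}}=\lc a_{n+3}\rc$ and $A_{b_{n+3}}=\lc a_{n+2}\rc$, no other triple through these agents can block. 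Hence for these tails the swap extends to a stable solution exactly when the aligned configuration does, and your hoped-for cascade must stall: after forcing $N(b_{n+1})=c_{n+1}$ and $N(b_{i,j})=c_{i,j}$, every man who could use the dog $a_{n+2}$ is either sitting with his first choice or is a man $b\in B$ matched inside $P(b)$, who ranks every $a_{n+2}$-loving woman ($c_{n+1}$ and the $c_{i,j}$) below his partner, so no contradiction can be extracted. The fix is simply to stop after your second paragraph and state the set-wise conclusion; as written, the ``decisive step'' is both unnecessary and impossible.
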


\begin{proof} Let $N$ be a solution to $\Jcal$. Suppose by contradiction that this is not true. Then at least one of the men $b_{n+2}$ and $b_{n+3}$ must be matched to a woman that is neither  $c_{n+2}$ nor $c_{n+3}$. Assume that this is $b_{n+2}$. The case with $b_{n+3}$ follows from a symmetric argument. Then since $c_{n+2}$ was the most preferred woman of $b_{n+2}$ it follows that $c_{n+2}$ must be matched to some $b'$ such that $(b',b_{n+2}) \in \Rcal_{c_{n+2}}$, since otherwise the pair $(b_{n+2},c_{n+2})$ would be blocking. Now from \eqref{setAb} we have that $A_{b_{n+2}} = \lc a_{n+3} \rc$ hence $\alpha(b_{n+2}, c_{n+2}) = a_{n+3}$ and therefore it follows from the preference list $P'(c_{n+2})$ that $(b',b_{n+2}) 
\in \Rcal_{c_{n+2}}$ if and only if $b=b_{n+3}$. Hence $c_{n+2}$ must be matched to $b_{n+3}$. But now we can argue that since $c_{n+3}$ was the most preferred woman of $b_{n+3}$,  $c_{n+3}$ must be matched to some $b'$ such that $(b',b_{n+3}) \in \Rcal_{c_{n+3}}$. Again from \eqref{setAb} we have that $A_{b_{n+3}} = \lc a_{n+3} \rc$ hence $\alpha(b_{n+3}, c_{n+3}) = a_{n+2}$ and therefore it follows from the preference list $P'(c_{n+3})$ that  $(b',b_{n+3}) \in \Rcal_{c_{n+3}}$ if and only if $b' = b_{n+2}$. But this means that 
$c_{n+3}$ must be matched to $b_{n+2}$, which contradicts our assumption that $b_{n+2}$ is not matched to neither $c_{n+2}$ nor $c_{n+3}$. \qed
\end{proof}

\begin{lemma}\label{lem6} In any solution to $\Jcal$ $b_{n+1}$ is matched to $c_{n+1}$.
\end{lemma}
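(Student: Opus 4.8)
The plan is to mirror the contradiction argument used for Lemma~\ref{lem5}, exploiting that $b_{n+1}$ ranks $c_{n+1}$ at the top of its list and that, through the \SE-to-\SMG\ correspondence of Theorem~\ref{thm0}, a solution $N$ to $\Jcal$ is exactly a stable matching of the induced \SMG\ instance whose preference sets are given by \eqref{set}.

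First I would assume, for contradiction, that $N(b_{n+1}) \neq c_{n+1}$. Since $b_{n+1}$ ranks $c_{n+1}$ first among all women of $C'$, this forces $c_{n+1} \succ_{b_{n+1}} N(b_{n+1})$. Consequently, for the pair $(b_{n+1}, c_{n+1})$ not to be blocking, the stability condition of \SMG\ requires $(N(c_{n+1}), b_{n+1}) \in \Rcal_{c_{n+1}}$.

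Next I would pin down exactly which men $b'$ satisfy $(b', b_{n+1}) \in \Rcal_{c_{n+1}}$. By \eqref{setAb}, we have $b_{n+1} \in B' \bb (B \cup \lc b_{n+2} \rc)$, so $A_{b_{n+1}} = \lc a_{n+2} \rc$ and hence $\alpha(b_{n+1}, c_{n+1}) = a_{n+2}$. From the definition \eqref{set}, $(b', b_{n+1}) \in \Rcal_{c_{n+1}}$ holds if and only if $M(b') \succeq_{c_{n+1}} a_{n+2}$. Since the preference list $P'(c_{n+1})$ ranks $a_{n+2}$ in first position, this can happen only when $M(b') = a_{n+2}$, that is, when $b' = b_{n+2}$. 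Therefore the only man $b'$ for which $(b', b_{n+1}) \in \Rcal_{c_{n+1}}$ is $b_{n+2}$, which forces $N(c_{n+1}) = b_{n+2}$.

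The contradiction then comes directly from Lemma~\ref{lem5}, which already established that $b_{n+2}$ is matched to $c_{n+2}$ in $N$; hence $b_{n+2}$ cannot also be matched to $c_{n+1}$. This rules out $N(c_{n+1}) = b_{n+2}$ and completes the contradiction, so $N(b_{n+1}) = c_{n+1}$. The only delicate point is the correct computation of $\alpha(b_{n+1}, c_{n+1})$ from \eqref{setAb}, together with the observation that $c_{n+1}$'s top choice $a_{n+2}$ is precisely the dog owned by $b_{n+2}$ under $M$; once this is in place the argument is a short deduction chained onto Lemma~\ref{lem5}.
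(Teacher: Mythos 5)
Your proof is correct and follows essentially the same route as the paper's: assume $N(b_{n+1}) \neq c_{n+1}$, deduce from stability that $(N(c_{n+1}), b_{n+1}) \in \Rcal_{c_{n+1}}$, compute $A_{b_{n+1}} = \lc a_{n+2} \rc$ and $\alpha(b_{n+1},c_{n+1}) = a_{n+2}$ via \eqref{setAb}, conclude $N(c_{n+1}) = b_{n+2}$, and contradict Lemma~\ref{lem5}. One tiny imprecision: Lemma~\ref{lem5} only guarantees that $b_{n+2}$ is matched to one of $c_{n+2}$ or $c_{n+3}$ (not specifically $c_{n+2}$), but this is harmless since either way $b_{n+2}$ cannot be matched to $c_{n+1}$.
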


\begin{proof} Let $N$ be a solution to $\Jcal$. Suppose by contradiction that $b_{n+1}$ is not matched to $c_{n+1}$. Then since $c_{n+1}$ was the most preferred woman of $b_{n+1}$ it follows that $c_{n+1}$ must be matched to some $b'$ such that $(b',b_{n+1}) \in \Rcal_{c_{n+1}}$. Now from \eqref{setAb} we have that $A_{b_{n+1}} = \lc a_{n+2} \rc$ which implies that $\alpha(b_{n+1}, c_{n+1}) = a_{n+2}$. Therefore it follows from the preference list $P'(c_{n+1})$ that  $(b',b_{n+1}) \in \Rcal_{c_{n+1}}$ if and only if $b' = b_{n+2}$. But from Lemma \ref{lem5} we know that  $b_{n+2}$ is always matched to either $c_{n+2}$ or $c_{n+3}$. Therefore by contradiction, $b_{n+1}$ must be matched to $c_{n+1}$. \qed
\end{proof}

\begin{lemma}\label{lem7} In any solution to $\Jcal$ every man $b \in B$ is matched to a woman from the set $P(b)$.
\end{lemma}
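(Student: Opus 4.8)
The plan is to argue by contradiction, mirroring the proof of Lemma~\ref{lem2} from the \SMG\ reduction, but with the empty relation $\Rcal_{c_{n+1}}=\emptyset$ there replaced by the computed fact that here $c_{n+1}$ never prefers $b_{n+1}$ to an original man. Working in the \SMG\ formulation guaranteed by Theorem~\ref{thm0}, suppose for contradiction that some solution $N$ matches a man $b \in B$ to a woman $N(b) \notin P(b)$. First I would invoke Lemma~\ref{lem6} to fix $N(c_{n+1}) = b_{n+1}$; since $b \neq b_{n+1}$ this already shows $N(b) \neq c_{n+1}$. From the description of $P'(b)$ — the women of $P(b)$ first, then $c_{n+1}$, then the rest arbitrarily — together with $N(b) \notin P(b)$ and $N(b) \neq c_{n+1}$, it follows that $c_{n+1} \succ_b N(b)$.

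The crux is then to show that $(b, c_{n+1})$ is a blocking pair, i.e. that $(N(c_{n+1}), b) = (b_{n+1}, b) \notin \Rcal_{c_{n+1}}$. For this I would compute $\alpha(b, c_{n+1})$. By \eqref{setAb} every original man $b \in B$ has $a_{n+2} \in A_b$, and from the list $P'(c_{n+1}) = a_{n+2}, [\cdots]$ the dog $a_{n+2}$ is ranked first by $c_{n+1}$; hence $\alpha(b, c_{n+1}) = \max_{\succ_{c_{n+1}}} A_b = a_{n+2}$. By the definition \eqref{set}, membership $(b_{n+1}, b) \in \Rcal_{c_{n+1}}$ would require $M(b_{n+1}) = a_{n+1} \succeq_{c_{n+1}} \alpha(b, c_{n+1}) = a_{n+2}$, which is false since $a_{n+2}$ is the top dog in $c_{n+1}$'s list and $a_{n+1} \neq a_{n+2}$. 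Thus $(b_{n+1}, b) \notin \Rcal_{c_{n+1}}$, so the pair $(b, c_{n+1})$ blocks $N$, and this contradiction completes the argument.

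The step I expect to require the most care is the identification $\alpha(b, c_{n+1}) = a_{n+2}$ combined with getting the \SMG\ blocking semantics oriented correctly: one must test the correct entry of $\Rcal_{c_{n+1}}$, namely $(N(c_{n+1}), b)$ rather than $(b, N(c_{n+1}))$, and confirm that it is $a_{n+2}$, and not some dog of the form $a_{i, r_{c_i}(b)}$, that attains the maximum defining $\alpha$. Both facts hinge only on the explicit lists in Table~\ref{table} and on \eqref{setAb}, so once these are read off correctly the reasoning is essentially identical in spirit to Lemma~\ref{lem2}, with $c_{n+1}$ serving as the universal ``punisher'' that forces each original man into his acceptable set $P(b)$.
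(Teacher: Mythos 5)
Your proposal is correct and follows essentially the same route as the paper's proof: assume $N(b)\notin P(b)$, use Lemma~\ref{lem6} to rule out $N(b)=c_{n+1}$ and fix $N(c_{n+1})=b_{n+1}$, observe $a_{n+2}\in A_b$ so that $\alpha(b,c_{n+1})=a_{n+2}$, and conclude that $(b,c_{n+1})$ blocks. Your explicit check that $(b_{n+1},b)\in\Rcal_{c_{n+1}}$ would force $M(b_{n+1})=a_{n+1}\succeq_{c_{n+1}}a_{n+2}$ is just an unfolded version of the paper's observation that $(b',b)\in\Rcal_{c_{n+1}}$ only when $b'=b_{n+2}$, so the two arguments coincide.
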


\begin{proof} Consider a man $b \in B$ and suppose by contradiction that $b$ is not matched to someone in $P(b)$. It follows from Lemma \ref{lem6} that $b$ cannot be matched to $c_{n+1}$, since $c_{n+1}$ is matched to $b_{n+1}$. From the way we defined the preferences of $b$ in $\Jcal$ we can conclude that $b$ prefers $c_{n+1}$ to its match in $N$. Furthermore we have  $a_{n+2} \in A_b$ from \eqref{setAb}. Now since $c_{n+1}$ ranks $a_{n+2}$ at the top of its list it follows that $\alpha(b, c_{n+1}) = a_{n+2}$ and thus $(b',b) \in \Rcal_{c_{n+1}}$ if and only if $b=b_{n+2}$. But then the pair $(b,c_{n+1})$ is blocking, thus contradicting the fact that  $N$ is  a solution to $\Jcal$. \qed
\end{proof}

\begin{lemma} $\Ical$ admits a complete weakly stable matching if and only if $\Jcal$ admits a stable extension.
\end{lemma}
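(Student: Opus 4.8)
The plan is to prove both directions by passing through the \SMG\ view of $\Jcal$ supplied by Theorem~\ref{thm0}: a perfect matching $N$ on $B' \times C'$ is a stable extension of $\Jcal$ if and only if it has no blocking pair with respect to the induced relations $\Rcal_c$. Throughout I would use Lemmas~\ref{lem5}--\ref{lem7} to control where the auxiliary agents and the original men must be matched in any solution, and Lemma~\ref{lem4} to translate between the \SMG\ relations $\Rcal_c$ and the \SMTI\ preferences $P(c)$.

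For the forward direction, suppose $N_0$ is a complete weakly stable matching of $\Ical$. I would extend it to a perfect matching $N$ on $B' \times C'$ by keeping $b \mapsto N_0(b)$ for $b \in B$, and matching every auxiliary man to his top-ranked woman, namely $b_{i,j} \mapsto c_{i,j}$ and $b_{n+i} \mapsto c_{n+i}$; a counting check confirms $N$ is perfect. Every auxiliary man is matched to his favourite woman, so he lies in no blocking pair. For an original man $b \in B$, the only women he strictly prefers to $N_0(b)$ lie in $P(b)$ and are strictly preferred there as well, since $b$'s list in $\Jcal$ opens with $P(b)$ in the same order, followed by $c_{n+1}$ and then the rest, and $N_0(b) \in P(b)$. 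For any such woman $c$, weak stability of $N_0$ gives $N_0(c) \succeq_c b$ in $\Ical$, and $N_0(c)=N(c)$ is acceptable to $c$; Lemma~\ref{lem4} then yields $(N(c),b) \in \Rcal_c$, so $(b,c)$ does not block. Hence $N$ has no blocking pair and is a stable extension.

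For the reverse direction, let $N$ be a stable extension of $\Jcal$. By Lemmas~\ref{lem5} and~\ref{lem6} the men $b_{n+1},b_{n+2},b_{n+3}$ occupy $c_{n+1},c_{n+2},c_{n+3}$, and by Lemma~\ref{lem7} every $b \in B$ is matched to a woman of $P(b) \subseteq C$. Since $|B|=|C|=n$ and the $n$ men of $B$ take distinct women of $C$, the restriction $N_0 := N|_{B \times C}$ is a perfect matching on $B \cup C$ using only acceptable pairs, as $N_0(b) \in P(b)$ forces $b \in P(N_0(b))$. To see $N_0$ is weakly stable, suppose for contradiction that $(b,c)$ blocks it in $\Ical$, i.e.\ $c \succ_b N_0(b)$ and $b \succ_c N_0(c)$ with both acceptable to each other. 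The first inequality persists in $\Jcal$ because $b$'s list there opens with $P(b)$ in the original order, so $c \succ_b N(b)$. Since $N(c)=N_0(c) \in B$ is acceptable to $c$ but $b \succ_c N_0(c)$, Lemma~\ref{lem4} gives $(N(c),b) \notin \Rcal_c$, making $(b,c)$ a blocking pair for $N$ in the \SMG\ sense and contradicting stability. Thus $N_0$ is a complete weakly stable matching of $\Ical$.

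The routine steps are the counting argument and the observation that strict preferences are preserved by the prefix structure of the men's lists. The delicate part is the reverse direction: one must first invoke Lemmas~\ref{lem5}--\ref{lem7} to guarantee that every original man is absorbed into $C$, so that $N|_{B\times C}$ is genuinely a perfect matching on the original agents using acceptable pairs, and then apply the exact ``at least as much as'' characterization of Lemma~\ref{lem4} to convert the strict \SMTI\ blocking inequality $b \succ_c N_0(c)$ into the negative membership $(N(c),b)\notin\Rcal_c$ needed to reach the contradiction.
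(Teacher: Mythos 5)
Your proposal is correct and follows essentially the same route as the paper's proof: the same extension of $N_0$ by matching each auxiliary man to his top-ranked woman, the same use of Lemmas~\ref{lem5}--\ref{lem7} to pin down the auxiliary agents and force every $b \in B$ into $P(b)$, and the same translation via Lemma~\ref{lem4} between the relations $\Rcal_c$ and the \SMTI\ preferences. The only cosmetic difference is that you phrase the reverse direction as a proof by contradiction where the paper argues directly, and you make explicit the counting and mutual-acceptability steps the paper leaves implicit.
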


\begin{proof} Suppose that $N$ is a complete  stable matching for $\Ical$. Then complete $N$ to a perfect matching on $B' \cup C'$ by matching $b_{i,j}$ to $c_{i,j}$ for every $i \in [n]$ and $j \in [t_i]$ and matching $b_{n+i}$ to $c_{n+i}$ for every $i \in \lc 1,2,3 \rc$. Note that every man in $B' \bb B$ is matched to the woman that it prefers the most 
hence no man in $B'$ can be part of a block. From Lemma \ref{lem7} we have that every man in $B$ is matched to a partner that is acceptable in $\Ical$. Hence for any $b \in B$, if $c$ is a woman that $b$ strictly prefers to $N(b)$ according to the preferences in $\Jcal$, then it must be the case that $c \in P(b)$ and $b$ also strictly prefers $c$ to $N(b)$ in $\Ical$. Since $N$ is a solution to $\Ical$ it follows that $c$ prefers $N(c)$ at least as much as $b$. We can now use Lemma \ref{lem4} to conclude that $(N(c),b) \in \Rcal_c$. 
Therefore $(b,c)$ cannot be a blocking pair in $\Jcal$. Hence the perfect matching that we defined from $N$ is solution to the instance $\Jcal$.

Conversely, suppose that $\Jcal$ admits a stable extension, and let $N$ be the part of this stable extension obtained by restricting it to the sets $B \cup C$. It follows from Lemma 
\ref{lem7} that $N$ is a perfect matching and every agent is matched to an acceptable partner. To see that there are no blocking pairs consider any pair $(b,c) \notin N$ such that $(b,c)$ is an acceptable pair, that is $b \in P(c)$ and $c \in P(b)$. Assume now that in $\Ical$ man $b$ strictly prefers $c$ to $N(b)$. Since $c \in P(b)$ it follows that $b$ also strictly prefers $c$ to $N(b)$ in $\Jcal$. Since $N$ is a solution to $\Jcal$ we must have $(N(c),b) \in \Rcal_c$. Again using Lemma \ref{lem4} we can conclude that $N(c)$ is a man in $P(c)$ that $c$ prefers at least as much as $b$ in $\Ical$. Therefore $N$ does not have any blocking pairs, and isa weakly stable matching in $\Ical$. \qed
\end{proof}

\end{proof}

\medskip
\noindent \textbf{Acknowledgement.} We thank the reviewers for their thorough and careful review. We  highly appreciate their insightful suggestions that led to a substantial improvement of the paper.

\end{document}